%&latex

\documentclass[11pt,a4paper]{article}    % onecolumn (standardformat)

%\smartqed  % flush right qed marks, e.g. at end of proof
%
%\usepackage{fullpage}
\usepackage[margin=1in]{geometry} %In Folder
\usepackage{amsmath,amssymb,amsthm}
\usepackage{color}
\usepackage{relsize}
\usepackage{cite}
\usepackage[none]{hyphenat}

\newtheorem{theorem}{Theorem}[section]

\newtheorem{lemma}[theorem]{Lemma}

\newtheorem{remark}{Remark}

\newcommand{\be}{\begin{equation}}
\newcommand{\ee}{\end{equation}}
\newcommand{\bea}{\begin{eqnarray}}
\newcommand{\eea}{\end{eqnarray}}

%%%%%%%%%%%%%
%%%%%%%%%%%%%
%%%%%%%%%%%%%
\usepackage[perpage,symbol*]{footmisc}

\usepackage{authblk}

\numberwithin{equation}{section}
%\onehalfspacing %Setspace Package Command
%\doublespacing %Setspace Package Command
\linespread{1.6} % 1.3 corresponds to onehalf spacing, 1.6: Doublespacing

\title{Gaussian Unitary Ensembles with Jump Discontinuities, \\PDEs and the Coupled Painlev\'{e} IV System}

\author[1]{Yang Chen}
\author[2,\thanks{Author to whom any correspondence should be addressed. E-mail: lvshulin1989@163.com}]{Shulin Lyu}
\affil[1]{Department of Mathematics, Faculty of Science and Technology, University of Macau, Macau, China}
\affil[2]{School of Mathematics and Statistics, Qilu University of Technology (Shandong Academy of Sciences), Jinan 250353, China}

\date{}
% The correct dates will be entered by the editor
\begin{document}
\maketitle
\sloppy{
\begin{abstract}
We study the Hankel determinant generated by the Gaussian weight with jump discontinuities at $t_1,\cdots,t_m$. By making use of a pair of ladder operators satisfied by the associated monic orthogonal polynomials and three supplementary conditions, we show that the logarithmic derivative of the Hankel determinant satisfies a second order partial differential equation which is reduced to the $\sigma$-form of a Painlev\'{e} IV equation when $m=1$. Moreover, under the assumption that $t_k-t_1$ is fixed for $k=2,\cdots,m$,
by considering the Riemann-Hilbert problem for the orthogonal polynomials, we construct direct relationships between the auxiliary quantities introduced in the ladder operators and solutions of a coupled Painlev\'{e} IV system.
\end{abstract}

$\mathbf{Keywords}$:  Gaussian unitary ensembles; Hankel determinant; Orthogonal polynomials;

Painlev\'{e} equations

$\mathbf{Mathematics\:\: Subject\:\: Classification\:\: 2020}$: 15B52; 33E17; 42C05

\section{Introduction}
The $n$-dimensional Gaussian unitary ensemble (GUE for short) is a collection of $n\times n$ Hermitian random matrices whose eigenvalues have the following joint probability density function
\begin{align*}
p(x_1,\cdots,x_n)=\frac{1}{Z_n}\cdot\frac{1}{n!}\prod_{1\leq j<k\leq n}(x_j-x_k)^2\prod_{\ell=1}^n {\rm e}^{-x_{\ell}^2},
\end{align*}
where $x_{\ell}\in(-\infty,+\infty)$ for $\ell=1,\cdots,n$.
The reader may consult \cite[sections 2.5, 2.6 and 3.3]{Mehta} for more information. The normalization constant $n!Z_n$, also known as the partition function, is a Selberg type integral and equal to $\pi^{n/2}2^{-n(n-1)/2}\prod\limits_{j=2}^{n}j!$ (see, for instance \cite[equation (17.6.7)]{Mehta}).
For an arbitrary interval $I\subset(-\infty,+\infty)$, the probability that all the eigenvalues of GUE are contained in $I$ is given by
 \begin{align}
 \mathbb{P}_I:=&Prob\left(\{x_{\ell}\in I, \ell=1,\cdots,n\}\right)\nonumber\\
 =&\frac{1}{Z_n}\cdot\frac{1}{n!}\int_{I^n}\prod_{1\leq j<k\leq n}(x_j-x_k)^2\prod_{\ell=1}^n {\rm e}^{-x_{\ell}^2}dx_1\cdots dx_n\nonumber\\
 =&\frac{1}{Z_n}\det\left(\int_I x^{i+j}{\rm e}^{-x^2}dx\right)_{i,j=0}^{n-1},\label{probD}
 \end{align}
where the third equality is due to the following representation of the multiple integral as the determinant of a Hankel matrix, which results from Heine's formula (see e.g. \cite[sections 2.1 and 2.2]{Szego})
\begin{align}\label{Heine}
\frac{1}{n!}\int_{I^n}\prod_{1\leq j<k\leq n}(x_j-x_k)^2\prod_{\ell=1}^n w(x_{\ell})dx_1\cdots dx_n=\det\left(\int_I x^{i+j}w(x)dx\right)_{i,j=0}^{n-1}.
\end{align}
Here $w(x)$ is an arbitrary positive weight function whose moments of all orders exist.

In the present paper, we study the following Hankel determinant
\begin{align}
D_n\left(\vec{t}\,\right):=&\det\left(\int_{-\infty}^{+\infty} x^{i+j}w(x;\vec{t}\,)dx\right)_{i,j=0}^{n-1}\label{def-Dn}\\
=&\frac{1}{n!}\int_{(-\infty,+\infty)^n}\prod_{1\leq j<k\leq n}(x_j-x_k)^2\prod_{\ell=1}^n w(x_{\ell};\vec{t}\,)dx_1\cdots dx_n,\nonumber
\end{align}
where $\vec{t}=\left(t_1,\cdots,t_m\right)$ with $t_1<\cdots<t_m$ and the second identity comes from \eqref{Heine}. Here the weight function $w(x;\vec{t}\,)$ is the Gaussian weight multiplied by a factor with $m$ jump discontinuities, i.e.
\begin{align}\label{weight}
w(x;\vec{t}\,):={\rm e}^{-x^2}\biggl(\omega_0+\sum_{k=1}^m \omega_k\cdot\theta(x-t_k)\biggr),
\end{align}
where $\sum\limits_{k=0}^\ell \omega_k\geq0$ for $\ell=0,1,\cdots,m$,
and $\theta(x-t_k)$ is 1 for $x>t_k$ and 0 otherwise.
It is well known that $D_n(\vec{t}\,)$ admits an alternative representation (see e.g. \cite{Szego})
\begin{align}\label{Dprodh}
D_n(\vec{t}\,)=\prod_{k=0}^{n-1}h_k(\vec{t}\,),
\end{align}
where $h_k(\vec{t}\,)$ is the square of the $L^2$-norm of the monic polynomials orthogonal with respect to $w(x;\vec{t}\,)$, namely
\begin{align}\label{m-or}
\int_{-\infty}^{+\infty}P_j(x;\vec{t}\,)P_k(x;\vec{t}\,)w(x;\vec{t}\,)dx=h_{k}(\vec{t}\,)\delta_{jk},\qquad j,k\geq0.
\end{align}
Here $\delta_{jk}$ is 1 for $j=k$ and 0 otherwise, and
\begin{align}\label{defPn}
P_k\left(x;\vec{t}\,\right):=x^k+p(k,\vec{t}\,)x^{k-1}+\cdots+P_k(0;\vec{t}\,),\qquad k\geq1,
\end{align}
with $P_0\left(x;\vec{t}\,\right):=1$. We will see in the forthcoming discussions that the orthogonal polynomials play a vital role in the study of random matrices.

When the weight function \eqref{weight} has one jump discontinuity (i.e. $m=1$) or two jump discontinuities (i.e. $m=2$), we find by combining \eqref{probD} with \eqref{def-Dn} that the Hankel determinant is intimately related to the probability distribution of GUE. In fact, for $m=1$ with the two parameters $\omega_0=0, \omega_1=1$ or $\omega_0=1, \omega_1=-1$ in the weight function, the quantity $D_n(t_1)/Z_n$ represents the smallest eigenvalue distribution $\mathbb{P}_{[t_1,+\infty)}$ or the largest eigenvalue distribution $\mathbb{P}_{(-\infty,t_1]}$ respectively. They were investigated by Tracy and Widom via the operator theory, and the integral representations for the distribution functions were established in terms of solutions of a Painlev\'{e} IV equation ($P_{IV}$ for short) for finite $n$ \cite{TW163} and of a Painlev\'{e} II equation ($P_{II}$ for short) at the soft edge \cite{TW159}. When $m=2$ with $\omega_0=0,\omega_1=1,\omega_2=-1$ or $\omega_0=1,\omega_1=-1,\omega_2=1$, the quantity $D_n(t_1,t_2)/Z_n$ denotes the probability that all or no eigenvalues of GUE lie in $[t_1,t_2]$ respectively. The former case was studied in \cite{BasorChenZhang} by using the ladder operator approach and the logarithmic derivative of the distribution function was shown to satisfy a two-variable generalization of a $P_{IV}$. When $\sum_{k=0}^\ell \omega_k$ is strictly positive for all $\ell\in\{0,1,\cdots,m\}$, the Hankel determinant $D_n\left(\vec{t}\,\right)$ denotes the moment generating function of the counting statistics, up to a constant. In this case, the discontinuities are called Fisher-Hartwig singularities of jump type and the asymptotics of $D_n\left(\vec{t}\,\right)$ was derived in \cite{Charlier}.

Using the definition of $P_k(x;\vec{t}\,)$ given by \eqref{m-or}-\eqref{defPn}, one derives a three-term recurrence relation for $P_k(x;\vec{t}\,)$ and the Christoffel-Darboux formula (to be presented in the next section). With these properties, one shows that $\{P_k(x;\vec{t}\,)\}$ satisfy a lowering and a raising operators and three supplementary conditions $(S_1), (S_2), (S_2')$, which form the basis of the ladder operator approach to study Hankel determinants. From the pair of ladder operators, one obtains a second order differential equation for the monic orthogonal polynomials and the properties of the coefficients with different choices of the potential function $v(x)=-\ln w(x)$ were described in \cite{ChenIsmail}. The reader may also refer to \cite{Ismail}, \cite[section 3.2]{Szego} and \cite{Assche} for an in-depth description of the ladder operator approach.

We state here some literature on the application of the ladder operator approach to different problems in random matrix theory. Magnus \cite{Magnus} took this approach to study semi-classical orthonormal polynomials and derive Painlev\'{e} equations. The approach was also used to determine the recurrence coefficients of the classic (monic) Jacobi polynomials \cite{ChenIsmail04}. In addition, it was adopted to study the linear statistics of the Laguerre unitary ensemble \cite{ChenIts10} which arises from the integrable quantum field, and to characterize the outage capacity of multiple-input multiple-output wireless communication system \cite{ChenHaqMcKay13,ChenMcKay}. For more recent works, see e.g. \cite{HanChen}, \cite{MinChen21}, \cite{MinChen22-AMP}.

When the dimension of the unitary ensemble tends to $\infty$, by conducting the celebrated Deift-Zhou steepest descent analysis \cite{DeiftZhou} to the Riemann-Hilbert problem satisfied by the associated orthogonal polynomials (known as the RH method), researchers described the asymptotic behavior of the partition function \cite{BMM}, the orthogonal polynomials \cite{ChenChenFan19}, the correlation kernel \cite{ChenChenFan19-1} and the probability distribution function \cite{XuZhao20}.  See Deift \cite{Deift} for a full description of this method.

The coupled Painlev\'{e} systems are recently used to characterize random matrix ensembles that involve more than one variables. For example, the partition function and the correlation kernel of GUE with pole singularities near the soft edge were described by the coupled Painlev\'{e} XXXIV system \cite{DaiXuZhang19}, while the gap probability of GUE with pole singularities at the hard edge admitted an integral representation in terms of the solutions of a coupled Painlev\'{e} III system \cite{DaiXuZhang18}; the Fredholm determinants of multi-interval Airy and Bessel kernels were expressed by the solutions of coupled $P_{II}$ \cite{cd18} and $P_{IV}$ \cite{cd} systems respectively. See also \cite{ACM,BasorChenZhang,ChenHaqMcKay13, LyuGriffinChen}.

When the weight function \eqref{weight} has only one jump discontinuity (i.e. $m=1$), under suitable double scaling and by employing the RH method, the asymptotic behavior of the associated orthogonal polynomials was characterized by the solutions of a Painlev\'{e} XXXIV equation when the jump is located at the edge of the spectrum \cite{XuZhao11}. When the dimension $n$ of the Hankel determinant $D_n(t_1)$ is finite, via the ladder operator approach, it was shown in \cite{ChenP, MinChen19} that $D_n(t_1)$ was intimately related to the $\tau$-function of $P_{IV}$. The two-jump case was considered in \cite{MinChen19}. By taking the ladder operator approach, the logarithmic derivative of the Hankel determinant was shown to satisfy a second order partial differential equation (PDE for short) which can be regarded as a two-variable generalization of the $\sigma$-form of $P_{IV}$.

The Hankel determinant $D_n(\vec{t}\,)$ for generic $m$ was investigated in \cite{WuXu20} and we briefly describe what was accomplished. For finite $n$, by studying the RH problem for $\{P_k(x;\vec{t}\,)\}$ and with the aid of the Lax pair, the logarithmic derivative of $D_n(\vec{t}\,)$ turned out to be the Hamiltonian of a coupled $P_{IV}$ system.
When $n\rightarrow\infty$ and the jump discontinuities $\{t_k, k=1,\cdots,m\}$ go to the edge of the spectrum, by adopting the RH method,
the asymptotic expressions for $D_n(\vec{t}\,)$ and $\{P_k(x;\vec{t}\,)\}$ were established in terms of the solutions of a coupled $P_{II}$ system.
The coupled $P_{IV}$ and $P_{II}$ systems for $m=2$ were reproduced in \cite{LyuChen20} by making use of the finite-$n$ results of \cite{MinChen19} obtained via the ladder operator approach. This finding may provide a further insight into the connection between the ladder operators and the RH problem satisfied by the orthogonal polynomials.

In this paper, we will employ the ladder operator approach to establish second order PDEs to characterize $D_n(\vec{t}\,)$. Our analysis is not simple generalization of the $m=2$ case which was studied in \cite{MinChen19}: to obtain identities from $(S_2')$, we need a technical algebraic operation; to derive Riccati equations for the auxiliary quantities, we make use of the compatibility conditions obtained from the differentiation of the orthogonality relation for the orthogonal polynomials. The PDE satisfied by the logarithmic derivative of $D_n(\vec{t}\,)$ looks simple in form and the sign functions of $\{\omega_k\}$ appear in it. Combining our results with the RH problem and the Lax pair presented in \cite{WuXu20}, we develop direct relationships between solutions of the PDEs and the coupled $P_{IV}$ system derived in \cite{WuXu20}. Such connections were established in \cite{LyuChen20} for the special case $m=2$ of our problem via a different method. The PDE and other identities obtained from the ladder operators were used therein instead of the RH problem and the Lax pair. The research strategy of the present paper was adopted in \cite{LyuChenXu22} to study the Hankel determinant generated by the Laguerre weight with $m$ jump discontinuities and a coupled Painlev\'{e} V system was found.

This paper is built up as follows. In the next section, we present the derivation of the ladder operators satisfied by the monic orthogonal polynomials $P_n(z;\vec{t}\,)$ and the three compatibility conditions $(S_1), (S_2),(S_2')$. Using these identities, we establish expressions for the recurrence coefficients and the coefficient of $z^{n-1}$ in the monic orthogonal polynomial $P_n(z;\vec{t}\,)$ in terms of $2m$ auxiliary quantities $\{R_{n,k},r_{n,k}, k=1,\cdots,m\}$ which are shown to satisfy a system of difference equations that can be iterated in $n$. Section 3 is devoted to the $t$-evolution and differential relations are derived. Combining these relations with the results in Section 2, we establish Toda equations for the recurrence coefficients and Riccati equations for $\{R_{n,k},r_{n,k}\}$ from which we derive for $\{R_{n,k}\}$ second order PDEs that are reduced to a $P_{IV}$ for $m=1$. Based on the results in Sections 2 and 3, we show in Section 4 that the logarithmic derivative of the Hankel determinant $D_n(\vec{t}\,)$ satisfy a second order PDE which is reduced to the $\sigma$-form of a $P_{IV}$ when $m=1$. Assuming that $t_k-t_1 (k=2,\cdots,m)$ are fixed and by considering the RH problem for $P_n(z;\vec{t}\,)$, we construct in the last section direct relationships between $\{R_{n,k},r_{n,k}\}$ and solutions of the coupled $P_{IV}$ system produced in \cite{WuXu20}.

\section{Ladder operator approach and difference equations}
In this section, we will describe the ladder operator approach and apply it to the Hankel determinant $D_n(\vec{t}\,)$ given by \eqref{def-Dn}.

Denote by $\{P_n(z;\vec{t}\,)\}$ the monic polynomials orthogonal with respect to
\begin{align}\label{weight-g}
w(z;\vec{t}\,)=w_0(z)\biggl(\omega_0+\sum\limits_{k=1}^m \omega_k\theta(z-t_k)\biggr),\qquad z\in[a,b],
\end{align}
where $w_0(z)$ is an arbitrary positive smooth function on $[a,b]$ with moments of all orders existing, and $w_0(a)=w_0(b)=0$. Here $\vec{t}, \omega_k (k=0,1,\cdots,m)$ and $\theta(\cdot)$ have the same meaning as in \eqref{weight}. For our problem, $w_0(z)={\rm e}^{-z^2}, z\in(-\infty,+\infty)$. Before focusing on this case, we will at first derive the lowering and raising operators for $P_n(z;\vec{t}\,)$ associated with  \eqref{weight-g} and three supplementary conditions.

\subsection{Ladder operators and three compatibility conditions}
To achieve our goal, we first present properties for $P_n(z;\vec{t}\,)$ which are indeed valid for generic monic orthogonal polynomials whose associated weight function has finite moments of all orders. By using \eqref{m-or}, one can get the following three-term recurrence relation
\begin{align}\label{OP-recu}
zP_n(z;\vec{t}\,)=P_{n+1}(z;\vec{t}\,)+\alpha_n(\vec{t}\,)P_n(z;\vec{t}\,)+\beta_n(\vec{t}\,)P_{n-1}(z;\vec{t}\,),\qquad n\geq0,
\end{align}
subject to the initial conditions
\[P_0(z;\vec{t}\,):=1,\qquad\qquad\beta_0(\vec{t}\,):=0,\qquad\qquad P_{-1}(z;\vec{t}\,):=0.\]
Inserting \eqref{defPn} into \eqref{OP-recu} gives us
\begin{align}\label{m-alp}
\alpha_n(\vec{t}\,)=&p(n,\vec{t}\,)-p(n+1,\vec{t}\,),\qquad n\geq0,
\end{align}
with $p(0,\vec{t}\,):=0$, and a telescopic sum of \eqref{m-alp} yields
\begin{align}\label{sumal}
\sum_{j=0}^{n-1}\alpha_j(\vec{t}\,)=-p(n,\vec{t}\,).
\end{align}
 Multiplying both sides of \eqref{OP-recu} by $P_{n-1}(z;\vec{t}\,)w(z;\vec{t}\,)$ and integrating from $-\infty$ to $+\infty$ with respect to $z$, one finds
\begin{align}\label{bth} \beta_n(\vec{t}\,)=&\frac{h_n(\vec{t}\,)}{h_{n-1}(\vec{t}\,)},\qquad n\geq1.
\end{align}
Combining it with \eqref{Dprodh} leads us to the relationship between $\beta_n(\vec{t}\,)$ and the Hankel determinant:
\begin{align}\label{btD}
\beta_n(\vec{t}\,)=\frac{D_{n+1}(\vec{t}\,)D_{n-1}(\vec{t}\,)}{D_n^2(\vec{t}\,)}.
\end{align}
Moreover, from \eqref{OP-recu} we find the following Christoffel-Darbox formula \cite{Ismail,Szego}
\begin{align}\label{CD}
\sum_{k=0}^{n-1}\frac{P_k(x;\vec{t}\,\,)P_k(y;\vec{t}\,)}{h_k(\vec{t}\,)}=\frac{P_n(x;\vec{t}\,)P_{n-1}(y;\vec{t}\,)-P_{n-1}(x;\vec{t}\,)P_n(y;\vec{t}\,)}{h_{n-1}(\vec{t}\,)(x-y)}.
\end{align}

By making use of this formula, the recurrence relation \eqref{OP-recu} and the definition of the monic orthogonal polynomials given by \eqref{m-or}-\eqref{defPn}, one derives the following lowering operator.
% for monic polynomials orthogonal with respect to weight functions having $m$ jump discontinuities.

\begin{lemma}\label{lad-oper}
The monic polynomials $P_n(z;\vec{t}\,)$ orthogonal with respect to \eqref{weight-g} satisfy the following lowering operator
\begin{align}\label{lp}
\frac{d}{dz}P_n(z;\vec{t}\,)=\beta_n(\vec{t}\,)A_n(z;\vec{t}\,)P_{n-1}(z;\vec{t}\,)-B_n(z;\vec{t}\,)P_n(z;\vec{t}\,),
\end{align}
where $A_n(z;\vec{t}\,)$ and $B_n(z;\vec{t}\,)$ are given by
\begin{subequations}\label{defAnBn}
\begin{align}
A_n(z;\vec{t}\,):=&\frac{1}{h_n(\vec{t}\,)}\int_{a}^{b}\frac{{\rm v}_0'(z)-{\rm v}_0'(y)}{z-y}P_n^2(y;\vec{t}\,)w(y;\vec{t}\,)dy+\sum_{k=1}^m \frac{R_{n,k}(t_k;\vec{t}\,)}{z-t_k},\label{m-defAn}\\
B_n(z;\vec{t}\,):=&\frac{1}{h_{n-1}(\vec{t}\,)}\int_{a}^{b}\frac{{\rm v}_0'(z)-{\rm v}_0'(y)}{z-y}P_n(y;\vec{t}\,)P_{n-1}(y;\vec{t}\,)w(y;\vec{t}\,)dy+\sum_{k=1}^m\frac{r_{n,k}(t_k;\vec{t}\,)}{z-t_k},\label{m-defBn}
\end{align}
\end{subequations}
and ${\rm v}_0(z)=-\ln w_0(z)$. Here the auxiliary quantities $R_{n,k}(\vec{t}\,)$ and $r_{n,k}(\vec{t}\,)$ for $ k=1,\cdots,m$ are defined by
\begin{subequations}\label{Rnrn}
\begin{align}
R_{n,k}(\vec{t}\,):=&\omega_k\frac{w_0(t_k)}{h_n(\vec{t}\,)}P_n^2(t_k;\vec{t}\,),\label{defR}\\
r_{n,k}(\vec{t}\,):=&\omega_k\frac{w_0(t_k)}{h_{n-1}(\vec{t}\,)}P_n(t_k;\vec{t}\,)P_{n-1}(t_k;\vec{t}\,).\label{defr}
\end{align}
\end{subequations}
\end{lemma}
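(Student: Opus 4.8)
The plan is to begin from the observation that $\frac{d}{dz}P_n(z;\vec{t}\,)$ is a polynomial of degree $n-1$, hence expands in the orthogonal basis as $P_n'(z;\vec{t}\,)=\sum_{j=0}^{n-1}c_{n,j}P_j(z;\vec{t}\,)$ with $c_{n,j}=h_j^{-1}\int_a^b P_n'(y;\vec{t}\,)P_j(y;\vec{t}\,)w(y;\vec{t}\,)\,dy$. To evaluate $c_{n,j}$ I would integrate by parts; but since $w(y;\vec{t}\,)$ has jumps at $t_1,\dots,t_m$, the integral must first be split over the smooth pieces $[a,t_1],[t_1,t_2],\dots,[t_m,b]$, on each of which $w$ equals $w_0$ times a constant. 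Integrating by parts on each piece produces three kinds of contributions: outer boundary terms at $a,b$ that vanish because $w_0(a)=w_0(b)=0$; interior boundary terms at the $t_k$ that assemble into $-\sum_{k=1}^m\omega_k w_0(t_k)P_n(t_k;\vec{t}\,)P_j(t_k;\vec{t}\,)$; and a bulk term $-\int_a^b P_n(P_j w)'\,dy$, whose $P_j'$ part vanishes by orthogonality \eqref{m-or} and whose $w_0'=-{\rm v}_0'w_0$ part gives $\int_a^b {\rm v}_0'(y)P_n P_j\,w\,dy$. This yields $c_{n,j}h_j=\int_a^b {\rm v}_0'(y)P_nP_j\,w\,dy-\sum_{k=1}^m\omega_k w_0(t_k)P_n(t_k)P_j(t_k)$.

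Next I would reassemble $P_n'(z)=\sum_{j=0}^{n-1}c_{n,j}P_j(z)$ by carrying the weight $P_j(z)/h_j$ inside each term and applying the Christoffel--Darboux formula \eqref{CD}. The bulk part becomes $\int_a^b {\rm v}_0'(y)P_n(y)w(y)\,\frac{P_n(z)P_{n-1}(y)-P_{n-1}(z)P_n(y)}{h_{n-1}(z-y)}\,dy$, and the jump part becomes $\sum_{k=1}^m\omega_k w_0(t_k)P_n(t_k)\,\frac{P_n(z)P_{n-1}(t_k)-P_{n-1}(z)P_n(t_k)}{h_{n-1}(z-t_k)}$. The decisive manipulation on the bulk integral is the splitting ${\rm v}_0'(y)={\rm v}_0'(z)-\bigl({\rm v}_0'(z)-{\rm v}_0'(y)\bigr)$: the constant-in-$y$ piece ${\rm v}_0'(z)$ reproduces $\sum_{j=0}^{n-1}h_j^{-1}P_j(z)\int_a^b P_nP_j\,w\,dy=0$ by orthogonality and drops out, converting the bulk term into the regularized kernel $\frac{{\rm v}_0'(z)-{\rm v}_0'(y)}{z-y}$ that defines $A_n$ and $B_n$ in \eqref{defAnBn}.

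Finally I would collect the simple-pole contributions at $z=t_k$. Using the definitions \eqref{Rnrn} together with $\beta_n=h_n/h_{n-1}$ from \eqref{bth}, the jump part rewrites as $\sum_{k=1}^m\frac{r_{n,k}}{z-t_k}P_n(z)-\beta_n\sum_{k=1}^m\frac{R_{n,k}}{z-t_k}P_{n-1}(z)$. On the other hand, expressing the bulk integrals through $A_n$ and $B_n$ forces one to subtract back the very same poles $\sum_k R_{n,k}/(z-t_k)$ and $\sum_k r_{n,k}/(z-t_k)$ that were built into \eqref{defAnBn}; these subtracted poles combine with the jump part and cancel identically, leaving exactly $P_n'(z;\vec{t}\,)=\beta_n A_n(z;\vec{t}\,)P_{n-1}(z;\vec{t}\,)-B_n(z;\vec{t}\,)P_n(z;\vec{t}\,)$, which is \eqref{lp}. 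I expect the main obstacle to be the bookkeeping of the jump discontinuities: tracking the correct signs of the interior boundary terms at each $t_k$ after splitting the domain, and verifying that the poles produced by Christoffel--Darboux on the jump part match exactly, with opposite sign, those absorbed into $A_n$ and $B_n$. The remaining steps are the standard Chen--Ismail computation for a smooth weight.
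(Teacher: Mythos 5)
Your proposal is correct and is essentially the argument the paper itself appeals to: the paper offers no independent proof of this lemma, deferring to Remark 2 and Lemma 1 of \cite{BasorChen09}, whose derivation is exactly your route --- expand $P_n'$ in the orthogonal basis, integrate by parts piecewise across the jumps (outer boundary terms killed by $w_0(a)=w_0(b)=0$, interior terms producing the $\omega_k w_0(t_k)$ contributions), apply Christoffel--Darboux, and regularize via the splitting ${\rm v}_0'(y)={\rm v}_0'(z)-\bigl({\rm v}_0'(z)-{\rm v}_0'(y)\bigr)$. The only blemish is a dropped minus sign when you quote the jump part mid-argument (it should be $-\sum_{k=1}^m\omega_k w_0(t_k)P_n(t_k)\frac{P_n(z)P_{n-1}(t_k)-P_{n-1}(z)P_n(t_k)}{h_{n-1}(z-t_k)}$, as your own integration-by-parts in the first paragraph correctly shows); with that sign restored, the leftover poles from writing the bulk integrals in terms of $A_n$ and $B_n$ cancel against the jump part exactly as you claim, yielding \eqref{lp}.
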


\begin{remark}
The above lemma was given in Remark 2 of \cite{BasorChen09}, and it can be derived via an argument similar to the one for the $m=1$ case presented in Lemma 1 of \cite{BasorChen09}.

The Laguerre case where $w_0(z)=z^{\alpha}{\rm e}^{-z}$ in \eqref{weight-g} with $z\in[0,\infty),\alpha>-1$ was studied in \cite{LyuChenXu22} and the lowering operator was presented in Lemma 2.1 therein for $\alpha>0$ which agrees with our results. However, when $-1<\alpha<0$, since $w_0(0)\neq0$, $A_n(z)$ and $B_n(z)$ have to be modified (see Remark 2.2 of \cite{LyuChenXu22}).
 \end{remark}

Using the lowering operator \eqref{lp}, the recurrence relation \eqref{OP-recu}, the Christoffel-Darbox formula \eqref{CD} and the definitions of $A_n(z;\vec{t}\,)$ and $B_n(z;\vec{t}\,)$ given by \eqref{defAnBn}, we derive the raising operator for $P_n(z;\vec{t}\,)$ and three compatibility conditions satisfied by $A_n(z;\vec{t}\,)$ and $B_n(z;\vec{t}\,)$, numbered $(S_1),(S_2),(S_2')$. These will be presented in the following lemma. Here is the outline of the derivation: using the recurrence relation and the definitions of $A_n$ and $B_n$, we obtain $(S_1)$; combining it with the lowering operator and the recurrence relation, we deduce the raising operator; with it, the lowering operator, $(S_1)$ and the recurrence relation, we come to $(S_2)$; a combination of $(S_2)$ and $(S_1)$ leads us to $(S_2')$. For ease of notations, unless necessary we will not display the dependence on $\vec{t}\,$ in relevant quantities in the rest of this paper.
\begin{lemma}
For the monic polynomials orthogonal with respect to \eqref{weight-g}, the following raising operator hold:
\begin{align}\label{rp}
P_{n-1}'(z)=\left(B_n(z)+{\rm v}_0'(z)\right)P_{n-1}(z)-A_{n-1}(z)P_n(z),
\end{align}
where ${\rm v}_0(z):=\ln w_0(z)$. Moreover, $A_n(z)$ and $B_n(z)$ defined by \eqref{defAnBn} satisfy the following three identities
\begin{align}
B_{n+1}(z)+B_n(z)=&\left(z-\alpha_n\right)A_n(z)-{\rm v}_0'(z),\tag{$S_1$}\\
1+\left(z-\alpha_n\right)\left(B_{n+1}(z)-B_n(z)\right)=&\beta_{n+1}A_{n+1}(z)-\beta_nA_{n-1}(z),\tag{$S_2$}\\
B_n^2(z)+{\rm v}_0'(z)B_n(z)+\sum_{j=0}^{n-1}A_j(z)=&\beta_nA_n(z)A_{n-1}(z).\tag{$S_2'$}
\end{align}
\end{lemma}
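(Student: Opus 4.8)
The plan is to derive all four statements from the building blocks already established: the lowering operator \eqref{lp}, the three-term recurrence \eqref{OP-recu}, the Christoffel--Darboux formula \eqref{CD}, and the explicit forms of $A_n(z)$ and $B_n(z)$ in \eqref{defAnBn}. Throughout I would exploit the fact that the integrals defining $A_n$ and $B_n$ are, up to the $\frac{1}{z-t_k}$ pole terms, polynomial in $z$ because $\frac{{\rm v}_0'(z)-{\rm v}_0'(y)}{z-y}$ is a polynomial in $z$ and $y$ for the weights under consideration; this keeps the bookkeeping finite.

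To get $(S_1)$, I would start from the definition \eqref{m-defBn} of $B_{n+1}(z)+B_n(z)$ and substitute the recurrence \eqref{OP-recu} to rewrite $P_{n+1}P_n + P_n P_{n-1}$ in the integrands in terms of $z P_n^2$ and lower pieces, then recognize the combination that reassembles $(z-\alpha_n)A_n(z)$. The $-{\rm v}_0'(z)$ term arises because adding the two $\frac{{\rm v}_0'(z)-{\rm v}_0'(y)}{z-y}$ integrals and multiplying $A_n$ by $(z-\alpha_n)$ leaves exactly the ${\rm v}_0'(z)$ contribution uncancelled; the pole terms $\frac{R}{z-t_k}$ and $\frac{r}{z-t_k}$ match by the defining relations \eqref{Rnrn} together with the recurrence evaluated at $z=t_k$. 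For the raising operator \eqref{rp}, I would combine the lowering operator \eqref{lp} for $P_n$ with $(S_1)$ and the recurrence to solve for $P_{n-1}'$, eliminating $P_n'$ via differentiation of \eqref{OP-recu}.

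For $(S_2)$, the standard route is to differentiate the lowering operator \eqref{lp} (or equivalently apply the raising operator to $P_n$ and the lowering operator to $P_{n-1}$) and compare with the recurrence; multiplying through by appropriate factors and using $(S_1)$ to remove $B_{n+1}+B_n$ in favor of $A_n$ produces the shift relation $\beta_{n+1}A_{n+1}-\beta_n A_{n-1}$ on the right. Finally, $(S_2')$ is obtained by the telescoping trick: one multiplies $(S_2)$ by $A_n(z)$, uses $(S_1)$ to replace $(z-\alpha_n)A_n$ by $B_{n+1}+B_n+{\rm v}_0'$, and sums the resulting relation from $0$ to $n-1$. The sum $\sum_{j=0}^{n-1}A_j(z)$ on the left emerges from telescoping the $\beta_{j+1}A_{j+1}A_j - \beta_j A_j A_{j-1}$ terms, while the $B_n^2 + {\rm v}_0' B_n$ piece comes from collapsing the $B_{j+1}^2 - B_j^2$ and $B_{j+1}-B_j$ telescopes, using $B_0 = 0$ and $\beta_0 = 0$ as the base case.

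The main obstacle I anticipate is the presence of the $m$ separate pole terms $\sum_k \frac{R_{n,k}}{z-t_k}$ and $\sum_k \frac{r_{n,k}}{z-t_k}$ in $A_n$ and $B_n$: in $(S_2')$ the product $\beta_n A_n A_{n-1}$ and the square $B_n^2$ generate both simple poles $\frac{1}{z-t_k}$ and double poles $\frac{1}{(z-t_k)^2}$ at each $t_k$, and one must verify that these singular parts cancel on both sides. Matching the double-pole residues should follow directly from the definitions \eqref{defR}--\eqref{defr}, since $\beta_n R_{n,k} R_{n-1,k}$ and $r_{n,k}^2$ both reduce to the same product of $P_n(t_k)^2 P_{n-1}(t_k)^2$ weighted appropriately; the simple-pole residues will require the recurrence \eqref{OP-recu} evaluated at $z=t_k$ to relate $P_{n+1}(t_k)$, $P_n(t_k)$, $P_{n-1}(t_k)$. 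This cross-term bookkeeping at the jump points is exactly where the single-jump argument of \cite{BasorChen09} must be extended, and it is the step I would carry out most carefully.
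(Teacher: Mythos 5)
Your outline for $(S_1)$ and for $(S_2')$ is essentially the paper's own: $(S_1)$ follows by feeding the recurrence \eqref{OP-recu}, in the form $\frac{(y-\alpha_n)P_n}{h_n}=\frac{P_{n+1}}{h_n}+\frac{P_{n-1}}{h_{n-1}}$, into both the integral parts of $B_{n+1}+B_n$ and the pole parts via \eqref{Rnrn}, and $(S_2')$ follows by multiplying $(S_2)$ by $A_n$, using $(S_1)$ to eliminate $(z-\alpha_n)A_n$, and telescoping with $B_0=A_{-1}=0$. Two corrections to that last step, though: the sum $\sum_{j=0}^{n-1}A_j$ is just the accumulation of the standalone $A_j$ terms (it is the $\beta_{j+1}A_{j+1}A_j-\beta_jA_jA_{j-1}$ terms that telescope, producing the right-hand side $\beta_nA_nA_{n-1}$, not the left-hand sum); and the pole bookkeeping you flag as the ``main obstacle'' is not part of this lemma at all. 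Once $(S_1)$ and $(S_2)$ hold as identities in $z$, $(S_2')$ follows by pure algebra, with no residue matching; the partial-fraction analysis at the points $t_k$ only enters the \emph{next} lemma, where the difference equations \eqref{s2'-3}--\eqref{s2'-1} are extracted from $(S_2')$ for the Gaussian weight.

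The genuine gap is a circularity between your derivations of the raising operator and of $(S_2)$. Your route to \eqref{rp} is to differentiate \eqref{OP-recu}, eliminate the derivatives via the lowering operator, and then use $(S_1)$. Differentiating \eqref{OP-recu} produces $P_{n+1}'+(\alpha_n-z)P_n'+\beta_nP_{n-1}'-P_n=0$, so you must remove \emph{both} $P_n'$ and $P_{n+1}'$ with \eqref{lp}; after also eliminating $P_{n+1}$ via the recurrence and applying $(S_1)$ to the coefficient of $P_{n-1}$, what you actually obtain is
\[
P_{n-1}'(z)=\left(B_n(z)+{\rm v}_0'(z)\right)P_{n-1}(z)+\frac{1}{\beta_n}\left[1+(z-\alpha_n)\left(B_{n+1}(z)-B_n(z)\right)-\beta_{n+1}A_{n+1}(z)\right]P_n(z),
\]
and identifying the bracket with $-\beta_nA_{n-1}(z)$ is precisely the statement $(S_2)$ --- which you propose to prove \emph{afterwards}, using the raising operator. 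Neither step closes without the other. The paper breaks this circle by deriving the raising operator with no differentiation of the recurrence: shift $n\to n-1$ in the lowering operator \eqref{lp}, use the shifted recurrence to replace $\beta_{n-1}P_{n-2}$ by $(z-\alpha_{n-1})P_{n-1}-P_n$, and apply $(S_1)$ at index $n-1$ to the coefficient of $P_{n-1}$; only then does it differentiate \eqref{OP-recu}, substitute the lowering operator for $P_{n+1}'$, $P_n'$ and the (now established) raising operator for $P_{n-1}'$, and read off $(S_2)$. Reordering your argument in this way repairs the proof, and the remainder of your outline then goes through as in the paper.
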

\begin{proof}
From the three-term recurrence relation \eqref{OP-recu} and the identity \eqref{bth}, we have
\begin{align}\label{rec-1}
\frac{(y-\alpha_n)P_n(y)}{h_n}=\frac{P_{n+1}(y)}{h_n}+\frac{P_{n-1}(y)}{h_{n-1}}.
\end{align}
With this equality and the definitions of $R_{n,k}$ and $r_{n,k}$ given by \eqref{Rnrn}, we get
\begin{align*}
r_{n+1,k}+r_{n,k}=(t_k-\alpha_n)R_{n,k}, \qquad\qquad k=1,\cdots,m.
\end{align*}
Combing it and \eqref{rec-1} with the definitions of $A_n(z)$ and $B_n(z)$ given by \eqref{defAnBn}, we obtain $(S_1)$.

Replacing $n$ by $n-1$ in the lowering operator \eqref{lp} and the recurrence relation \eqref{OP-recu}, and combining the resulting equations to eliminate $\beta_{n-1}P_{n-2}(z)$, we find
\[P_{n-1}'(z)=\left((z-\alpha_{n-1})A_{n-1}-B_{n-1}\right)P_{n-1}(z)-A_{n-1}P_n(z).\]
Using $(S_1)$ with $n-1$ in place of $n$ to simplify the coefficient of $P_{n-1}(z)$, we are led to the raising operator \eqref{rp}.

Now we make use of $(S_1)$, the ladder operators \eqref{lp} and \eqref{rp}, and the recurrence relation \eqref{rec-1} to deduce $(S_2)$. See below for the derivation. Differentiating both sides of the recurrence relation \eqref{OP-recu} with respect to $z$, we have
\[P_{n+1}'(z)+\left(\alpha_n-z\right)P_n'(z)+\beta_nP_{n-1}'(z)-P_n(z)=0.\]
Replacing $P_{n+1}'(z)$ and $P_n'(z)$ by using the lowering operator \eqref{lp} and $P_{n-1}'(z)$ by using the raising operator \eqref{rp}, we obtain
\begin{align*}
&\left(\beta_{n+1}A_{n+1}-\beta_nA_{n-1}+B_n(z-\alpha_n)-1\right)P_n(z)\\
&\qquad\qquad\qquad-B_{n+1}P_{n+1}(z)+\beta_n\left[(\alpha_n-z)A_n+B_n+{\rm v}_0'\right]P_{n-1}(z)=0.
\end{align*}
Getting rid of $P_{n+1}(z)$ by using the three-term recurrence relation \eqref{OP-recu}, in view of $(S_1)$, we find
\[\left[\beta_{n+1}A_{n+1}-\beta_nA_{n-1}+(B_n-B_{n+1})(z-\alpha_n)-1\right]P_n(z)=0,
\]
which gives us $(S_2)$.

Multiplying both sides of $(S_2)$ by $A_n(z)$ and eliminating $(z-\alpha_n)A_n(z)$ in the resulting equation by using $(S_1)$, we get
\begin{equation}\label{S1-S2}
\begin{aligned}
A_n(z)+\left[B_{n+1}^2(z)-B_n^2(z)\right]+{\rm v}_0'(z)\left[B_{n+1}(z)-B_n(z)\right]\\
=\beta_{n+1}A_{n+1}(z)A_n(z)-\beta_nA_n(z)A_{n-1}(z).
\end{aligned}
\end{equation}
Observe that the left hand side of \eqref{S1-S2} except for the first term, and the right hand side are both first order differences in $n$. Hence, we replace $n$ by $j$ in \eqref{S1-S2} and take a summation of it from $j=0$ to $j=n-1$, with the initial conditions $B_0(z)=A_{-1}(z)=0$, we arrive at $(S_2')$.
\end{proof}

\begin{remark}
The derivation of $(S_2)$ and $(S_2')$ presented in the above lemma can be found in \cite[Theorem 5.2]{Assche} where the potential function ${\rm v}(x)=-\ln w(x)$ was supposed to be differentiable and ${\rm v}'(x)$ was Lipschitz continuous. The Riemann-Hilbert problem for the orthogonal polynomials was used therein to deduce the ladder operators and $(S_1)$ followed from them.  This derivation was different from ours. See also \cite{Magnus}.

For polynomials orthonormal with respect to $w(x)$, with the potential function twice continuously differentiable, the ladder operators were described in \cite{ChenIsmail}.

Refer to \cite{ChenP}, \cite{BasorChen09} and \cite{ChenZhang} for the discussion of the ladder operators and three supplementary conditions for the Gaussian, Laguerre and Jacobi weight functions with one jump discontinuity respectively.
\end{remark}

\subsection{Difference equations}
We now focus on our weight function \eqref{weight}, namely,
$
w_0(z)={\rm e}^{-z^2}, a=-\infty,b=+\infty$ in \eqref{weight-g}.
Using \eqref{defAnBn} to calculate $A_n(z)$ and $B_n(z)$, and substituting them into $(S_1)$ and $(S_2')$, we obtain a series of difference equations. From these we establish expressions for the recurrence coefficients and the coefficient of $z^{n-1}$ in the monic orthogonal polynomial $P_n(z)$ in terms of the auxiliary quantities $\{R_{n,k},r_{n,k}\}$ which are shown to satisfy a system of difference equations that can be iterated in $n$.

 Substituting ${\rm v}_0(z)=-\ln w_0(z)=z^2$ into \eqref{defAnBn}, with the aid of the orthogonality relation \eqref{m-or}, we have the following result.

\begin{lemma} $A_n(z)$ and $B_n(z)$ defined in \eqref{defAnBn} with ${\rm v}_0(z)=z^2,a=-\infty,b=+\infty$ are given by
\begin{subequations}\label{AnBn}
\begin{align}
A_n(z)=&2+\sum_{k=1}^m\frac{R_{n,k}(\vec{t}\,)}{z-t_k},\label{An}\\
B_n(z)=&\sum_{k=1}^m \frac{r_{n,k}(\vec{t}\,)}{z-t_k},\label{Bn}
\end{align}
\end{subequations}
where $\{R_{n,k},r_{n,k}, k=1,\cdots,m\}$ are defined by \eqref{Rnrn}.
\end{lemma}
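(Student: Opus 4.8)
The plan is to exploit the fact that for the Gaussian weight the potential ${\rm v}_0(z)=z^2$ has a linear derivative, so that the divided-difference kernel appearing in the integral representations \eqref{defAnBn} collapses to a constant. Concretely, I would first compute ${\rm v}_0'(z)=2z$ and observe
\[
\frac{{\rm v}_0'(z)-{\rm v}_0'(y)}{z-y}=\frac{2z-2y}{z-y}=2,
\]
valid for all $z\neq y$ and extending continuously to $z=y$. The whole content of the lemma is that this single simplification reduces both integral terms to elementary expressions via orthogonality.

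Substituting this constant into \eqref{m-defAn}, the integral term becomes $\frac{2}{h_n}\int_{-\infty}^{+\infty}P_n^2(y)w(y;\vec{t}\,)\,dy$. By the normalization in the orthogonality relation \eqref{m-or} with $j=k=n$, this integral equals $h_n$, so the term is exactly $2$; this yields \eqref{An}. For \eqref{Bn}, the same substitution turns the integral in \eqref{m-defBn} into $\frac{2}{h_{n-1}}\int_{-\infty}^{+\infty}P_n(y)P_{n-1}(y)w(y;\vec{t}\,)\,dy$. Since $P_n$ and $P_{n-1}$ have distinct degrees, orthogonality \eqref{m-or} forces this integral to vanish, leaving only the residue sum over the jump points. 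In both cases the finite sums $\sum_{k=1}^m R_{n,k}/(z-t_k)$ and $\sum_{k=1}^m r_{n,k}/(z-t_k)$ are carried over unchanged from \eqref{defAnBn}.

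The only point requiring care, and it is a mild one, is that these are improper integrals over the whole line rather than over a compact interval, so I would remark that convergence is guaranteed because $w(y;\vec{t}\,)$ decays super-polynomially while $P_n^2$ and $P_nP_{n-1}$ are polynomials; this is precisely the existence-of-all-moments hypothesis under which the orthogonal polynomials and the quantities $h_n$ were defined. No genuine obstacle arises: the lemma is a direct specialization of \eqref{defAnBn}, and the entire proof amounts to the linearity of ${\rm v}_0'$ combined with two instances of the orthogonality relation \eqref{m-or}.
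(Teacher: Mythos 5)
Your proof is correct and follows exactly the paper's (implicit) argument: the paper derives this lemma by substituting ${\rm v}_0(z)=z^2$ into \eqref{defAnBn} so that the divided difference $({\rm v}_0'(z)-{\rm v}_0'(y))/(z-y)$ collapses to the constant $2$, and then invoking the orthogonality relation \eqref{m-or} to evaluate the two integrals as $2$ and $0$ respectively. Your added remark on convergence of the improper integrals is harmless but not needed, since it is already covered by the standing moment hypotheses.
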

Inserting \eqref{AnBn} into $(S_1)$,  by equating on both sides of the resulting expression the coefficients of $z^0$ and $(z-t_k)^{-1}$, we get
\begin{align}
&-2\alpha_n+\sum_{k=1}^m R_{n,k}=0,\label{s1-1}\\
&r_{n+1,k}+r_{n,k}=(t_k-\alpha_n)R_{n,k}, \qquad k=1,\cdots,m. \label{s1-2}
\end{align}
%From $(S_2)$, we find
%\begin{align}
%z^{0}:\qquad\qquad&1+\sum_{k=1}^m\left(r_{n+1,k}-r_{n,k}\right)=2\left(\beta_{n+1}-\beta_n\right),\label{s2-1}\\
%(z-t_k)^{-1}:\qquad&(t_k-\alpha_n)\left(r_{n+1,k}-r_{n,k}\right)=\beta_{n+1}R_{n+1,k}-\beta_nR_{n-1,k}, \quad k=1,\cdots,m.\label{s2-2}
%\end{align}

Substituting  \eqref{AnBn} into the left hand side of $(S_2')$, we have
\[l.h.s.=\biggl(\sum\limits_{k=1}^m \frac{r_{n,k}}{z-t_k}\biggr)^2+\sum_{k=1}^m\frac{1}{z-t_k}\biggl(\sum_{j=0}^{n-1}R_{j,k}+2t_kr_{n,k}\biggr)+\sum_{k=1}^m 2r_{n,k}+2n.\]
Expanding the first term, we find
\begin{align}
\biggl(\sum_{k=1}^m \frac{r_{n,k}}{z-t_k}\biggr)^2=\sum_{k=1}^m\frac{r_{n,k}^2}{(z-t_k)^2}+&\sum_{1\leq k<j\leq m} \frac{2r_{n,k}r_{n,j}}{(z-t_k)(z-t_j)}\nonumber\\
%=&\sum_{k=1}^m\frac{r_{n,k}^2}{(z-t_k)^2}+\sum_{1\leq k<j\leq m} \frac{2r_{n,k}r_{n,j}}{t_k-t_j}\left(\frac{1}{z-t_k}-\frac{1}{z-t_j}\right)\\
=\sum_{k=1}^m\frac{r_{n,k}^2}{(z-t_k)^2}+&\sum_{1\leq k<j\leq m} \frac{2r_{n,k}r_{n,j}}{t_k-t_j}\cdot\frac{1}{z-t_k}\nonumber\\
+&\sum_{1\leq k<j\leq m} \frac{2r_{n,k}r_{n,j}}{t_j-t_k}\cdot\frac{1}{z-t_j}\label{S2'}\\
=\sum_{k=1}^m\frac{r_{n,k}^2}{(z-t_k)^2}+&\sum_{k\neq j} \frac{2r_{n,k}r_{n,j}}{t_k-t_j}\cdot\frac{1}{z-t_k},\label{sum-split}
\end{align}
where the second equality is due to the fact that
\begin{align*}
\frac{1}{(z-t_k)(z-t_j)}=\frac{1}{t_k-t_j}\biggl(\frac{1}{z-t_k}-\frac{1}{z-t_j}\biggr),\
\end{align*}
and the third identity is obtained by exchanging $j$ and $k$ in the third term of \eqref{S2'}. The left hand side of $(S_2')$ now reads
\begin{equation}\label{lhs}
\begin{aligned}
l.h.s.=&\sum_{k=1}^m\frac{r_{n,k}^2}{(z-t_k)^2}+\sum_{k=1}^m\frac{1}{z-t_k}\Biggl(\sum_{\substack{j=1 \\ j\neq k}}^m \frac{2r_{n,k}r_{n,j}}{t_k-t_j}+\sum_{j=0}^{n-1}R_{j,k}+2t_kr_{n,k}\Biggr)\\
&+\sum_{k=1}^m2r_{n,k}+2n.
\end{aligned}
\end{equation}
To continue, we insert \eqref{AnBn} into the right hand side of $(S_2')$ and get
\[r.h.s.=\beta_n\Biggl[\sum\limits_{k=1}^m\frac{R_{n,k}}{z-t_k}\cdot\sum\limits_{j=1}^m\frac{R_{n-1,j}}{z-t_j}+\sum_{k=1}^m\frac{2(R_{n-1,k}+R_{n,k})}{z-t_k}+4\Biggr].\]
The first term in the square bracket can be simplified to
\begin{align*}
\sum\limits_{k=1}^m\frac{R_{n,k}}{z-t_k}\cdot\sum\limits_{j=1}^m\frac{R_{n-1,j}}{z-t_j}
%=&\sum\limits_{k=1}^m\frac{R_{n,k}R_{n-1,k}}{(z-t_k)^2}
%+\sum\limits_{j\neq k}\frac{R_{n,k}R_{n-1,j}}{(z-t_k)(z-t_j)}\\
%=&\sum\limits_{k=1}^m\frac{R_{n,k}R_{n-1,k}}{(z-t_k)^2}
%+\sum\limits_{j\neq k}\frac{R_{n,k}R_{n-1,j}}{t_k-t_j}\cdot\frac{1}{z-t_k}+\sum\limits_{j\neq k}\frac{R_{n,k}R_{n-1,j}}{t_j-t_k}\cdot\frac{1}{z-t_j}\\
=&\sum\limits_{k=1}^m\frac{R_{n,k}R_{n-1,k}}{(z-t_k)^2}
+\sum\limits_{j\neq k}\frac{R_{n,k}R_{n-1,j}+R_{n,j}R_{n-1,k}}{(z-t_k)(t_k-t_j)},
\end{align*}
via an argument similar to the one used to derive \eqref{sum-split}. Hence the right hand side of $(S_2')$ reads
\begin{equation}\label{rhs}
\begin{aligned}
r.h.s.=&\sum_{k=1}^m\frac{\beta_nR_{n,k}R_{n-1,k}}{(z-t_k)^2}+4\beta_n\\
&+\sum_{k=1}^m\frac{\beta_n }{z-t_k}\Biggl[\sum_{\substack{j=1 \\ j\neq k}}^m\frac{R_{n,k}R_{n-1,j}+R_{n,j}R_{n-1,k}}{t_k-t_j}+2(R_{n-1,k}+R_{n,k})\Biggr].
\end{aligned}
\end{equation}
Comparing the coefficients of $(z-t_k)^{-2}, (z-t_k)^{-1}$ and $z^0$ in \eqref{lhs} and \eqref{rhs}, we get the following equations
\begin{align}\label{s2'-3}
&r_{n,k}^2=\beta_n R_{n,k}R_{n-1,k},
\end{align}
\begin{equation}\label{s2'-2}
\begin{aligned}
&\sum_{\substack{j=1 \\ j\neq k}}^m\frac{2 r_{n,k}r_{n,j}}{t_k-t_{j}}+\sum_{j=0}^{n-1} R_{j,k}+2t_kr_{n,k}\\
&\qquad=\beta_n\sum_{\substack{j=1 \\ j\neq k}}^m\frac{R_{n,k}R_{n-1,j}+R_{n-1,k}R_{n,j}}{t_k-t_{j}}+2\beta_n(R_{n,k}+R_{n-1,k}),\qquad
\end{aligned}
\end{equation}
for $k=1,\cdots,m$, and
\begin{align}\label{s2'-1}
&2\sum_{k=1}^m r_{n,k}+2n =4\beta_n.\qquad\qquad\qquad\qquad
\end{align}

We summarize \eqref{s1-1} and \eqref{s2'-1} in the following lemma.
\begin{lemma}\label{albtRr}
 The recurrence coefficients in \eqref{OP-recu} are expressed in terms of the auxiliary quantities by
 \begin{subequations}\label{alRbtr}
\begin{align}
\alpha_n=&\frac{1}{2}\sum_{k=1}^m R_{n,k},\label{alR}\\
\beta_n=&\frac{1}{2}\biggl(\sum_{k=1}^m r_{n,k}+n\biggr).\label{btr}
\end{align}
\end{subequations}
\end{lemma}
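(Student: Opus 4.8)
The plan is to read off both identities as the coefficient of $z^0$ (equivalently, the value at $z=\infty$) in the compatibility conditions $(S_1)$ and $(S_2')$, once the explicit expressions \eqref{AnBn} for $A_n(z)$ and $B_n(z)$ have been inserted. Both $A_n(z)=2+\sum_{k=1}^m R_{n,k}/(z-t_k)$ and $B_n(z)=\sum_{k=1}^m r_{n,k}/(z-t_k)$ are rational functions with only simple poles at the jumps $t_k$, so every expression occurring in $(S_1)$ and $(S_2')$ has a partial-fraction form that can be produced termwise; since I only need the regular (constant) part, the poles at the $t_k$ can be discarded, which makes the computation short.

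For \eqref{alR} I would substitute \eqref{AnBn} into $(S_1)$ with ${\rm v}_0'(z)=2z$. The left-hand side $B_{n+1}(z)+B_n(z)$ is $O(1/z)$ and therefore has no constant term. On the right, $(z-\alpha_n)A_n(z)-2z$ carries the constant contribution $2(z-\alpha_n)-2z=-2\alpha_n$ from the leading term $2$ of $A_n$, while each piece $(z-\alpha_n)R_{n,k}/(z-t_k)$ contributes $R_{n,k}$ to the regular part once one writes $z-\alpha_n=(z-t_k)+(t_k-\alpha_n)$. Matching constants gives $-2\alpha_n+\sum_{k=1}^m R_{n,k}=0$, which is precisely \eqref{s1-1}; solving this linear relation yields $\alpha_n=\tfrac12\sum_{k=1}^m R_{n,k}$.

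For \eqref{btr} the same bookkeeping applied to $(S_2')$ has in fact already been carried out in \eqref{lhs} and \eqref{rhs}. Keeping only the $z^0$ terms there, the left-hand side contributes $\sum_{k=1}^m 2r_{n,k}$ (from ${\rm v}_0'(z)B_n(z)=2z\sum_k r_{n,k}/(z-t_k)$) together with $2n$ (the constant part of $\sum_{j=0}^{n-1}A_j(z)$), the term $B_n^2(z)=O(1/z^2)$ being negligible, whereas the right-hand side $\beta_n A_n(z)A_{n-1}(z)$ tends to $4\beta_n$ as $z\to\infty$. This produces $2\sum_{k=1}^m r_{n,k}+2n=4\beta_n$, i.e. \eqref{s2'-1}, hence $\beta_n=\tfrac12\bigl(\sum_{k=1}^m r_{n,k}+n\bigr)$. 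There is no genuine obstacle here: both relations \eqref{s1-1} and \eqref{s2'-1} were already isolated during the earlier coefficient comparison, so the lemma merely packages the two $z^0$-identities and inverts them. The only point requiring care is the bookkeeping of the $z\to\infty$ limit, namely checking that the cross terms indexed by $k\neq j$ and all simple-pole contributions drop out of the constant part, leaving exactly the two relations above.
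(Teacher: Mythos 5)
Your proposal is correct and follows essentially the same route as the paper: the paper also obtains \eqref{alR} and \eqref{btr} by inserting the explicit forms \eqref{AnBn} into $(S_1)$ and $(S_2')$ and reading off the $z^0$ coefficients, which are exactly \eqref{s1-1} and \eqref{s2'-1}. Your bookkeeping of the constant parts (including the observation that the $k\neq j$ cross terms and all simple-pole contributions drop out) matches the computation the paper carries out in deriving \eqref{lhs} and \eqref{rhs}.
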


Using the above expressions and the difference equations \eqref{s1-2} and \eqref{s2'-3}, we establish the following system of difference equations for $\{R_{n,k},r_{n,k},k=1,\cdots,m\}$ that can be iterated in $n$.
\begin{lemma}
The auxiliary quantities $\{R_{n,k},r_{n,k},k=1,\cdots,m\}$ satisfy the following difference equations:
\begin{align*}
r_{n+1,k}=&-r_{n,k}+\biggl(t_k-\frac{1}{2}\sum_{j=1}^m R_{n,j}\biggr)R_{n,k},\qquad k=1,2,\cdots,m,\\
R_{n,1}=&\frac{2r_{n,1}^2}{\biggl(\sum\limits_{k=1}^m r_{n,k}+n\biggr)R_{n-1,1}},\\
R_{n,k}=&\frac{r_{n,k}^2R_{n-1,1}}{r_{n,1}^2R_{n-1,k}}R_{n,1},\qquad k=2,\cdots,m,
\end{align*}
which can be iterated in $n$ with the initial conditions
\[R_{0,k}=\frac{\omega_k{\rm e}^{-t_k^2}}{\int_{-\infty}^{+\infty}{\rm e}^{-x^2}\Big(\omega_0+\sum\limits_{k=1}^m \omega_k\cdot\theta(x-t_k)\Big)dx},\qquad r_{0,k}=0.\]
\end{lemma}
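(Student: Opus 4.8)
The plan is to recognize that the three difference equations in the statement are not genuinely new relations but algebraic rearrangements of the identities already extracted from $(S_1)$ and $(S_2')$, combined with the expressions for the recurrence coefficients in Lemma~\ref{albtRr}. Consequently the substance of the proof is substitution and rearrangement, after which I would settle the initial data and confirm that the resulting scheme can actually be run forward in $n$. Notably, \eqref{s2'-2} will play no role in any of the three relations; only \eqref{s1-2}, \eqref{s2'-3}, and Lemma~\ref{albtRr} are needed.

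First I would derive the evolution equation for $r_{n+1,k}$: equation \eqref{s1-2} already reads $r_{n+1,k}+r_{n,k}=(t_k-\alpha_n)R_{n,k}$, so substituting $\alpha_n=\tfrac{1}{2}\sum_{j=1}^m R_{n,j}$ from \eqref{alR} and moving $r_{n,k}$ to the right yields the first claimed equation at once. For the second equation I would take \eqref{s2'-3} at $k=1$, namely $r_{n,1}^2=\beta_n R_{n,1}R_{n-1,1}$, solve for $R_{n,1}$, and replace $\beta_n$ by $\tfrac12\big(\sum_{k=1}^m r_{n,k}+n\big)$ from \eqref{btr}; this produces exactly $R_{n,1}=2r_{n,1}^2/\big[(\sum_k r_{n,k}+n)R_{n-1,1}\big]$. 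For the third equation I would divide \eqref{s2'-3} at a general $k$ by \eqref{s2'-3} at $k=1$, so that the common factor $\beta_n$ cancels and $r_{n,k}^2/r_{n,1}^2=(R_{n,k}R_{n-1,k})/(R_{n,1}R_{n-1,1})$; solving for $R_{n,k}$ gives the stated formula.

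For the initial conditions I would evaluate the definitions \eqref{Rnrn} at $n=0$. Since $P_{-1}(z)\equiv0$, formula \eqref{defr} gives $r_{0,k}=0$ immediately. For $R_{0,k}$, formula \eqref{defR} with $P_0(z)\equiv1$ and $w_0(t_k)={\rm e}^{-t_k^2}$ collapses to $R_{0,k}=\omega_k{\rm e}^{-t_k^2}/h_0$, and $h_0=\int_{-\infty}^{+\infty}w(x;\vec{t}\,)dx=\int_{-\infty}^{+\infty}{\rm e}^{-x^2}\big(\omega_0+\sum_{k=1}^m\omega_k\theta(x-t_k)\big)dx$, which is the asserted expression.

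Finally I would justify that the system "can be iterated in $n$." Starting from the given $\{R_{0,k}\}$ and $\{r_{0,k}=0\}$, the first equation (at $n=0$) produces $\{r_{1,k}\}$; the second and third equations (at $n=1$) then recover $\{R_{1,k}\}$ from $\{r_{1,k}\}$ and $\{R_{0,k}\}$; feeding these back into the first equation produces $\{r_{2,k}\}$, and the pattern continues, with the second and third equations recovering $\{R_{n,k}\}$ from $\{r_{n,k}\}$ and $\{R_{n-1,k}\}$ and the first equation advancing $r$. Since each of the three relations is a one-line rearrangement of an established identity, there is no real analytic difficulty; the only point requiring care—and the sole genuine, if minor, obstacle—is that the denominators $R_{n-1,1}$, $r_{n,1}$, $R_{n-1,k}$ and $\sum_k r_{n,k}+n$ be nonzero so that the second and third equations are well defined. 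By \eqref{defR}, \eqref{defr} and \eqref{bth} these are precisely the quantities one expects to be nonvanishing, and I would record this observation to close the argument.
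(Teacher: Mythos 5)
Your proposal is correct and follows exactly the route the paper intends: the paper states this lemma without a separate proof, noting only that it follows from the expressions in Lemma~\ref{albtRr} together with the difference equations \eqref{s1-2} and \eqref{s2'-3}, which is precisely your substitution/division argument, and your evaluation of the initial conditions from \eqref{Rnrn} with $P_0\equiv1$, $P_{-1}\equiv0$ is the standard one. Your added remarks on the iteration order and nonvanishing denominators go slightly beyond what the paper records but introduce no discrepancy.
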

\begin{remark}
For $m=1$, the two auxiliary quantities $R_{n,1}$ and $r_{n,1}$ were shown in \cite{MinChen19} to satisfy second order nonlinear  difference equations which may be connected with the discrete Painlev\'{e} IV equation.
\end{remark}

Now we proceed to derive expressions for $p(n,\vec{t}\,)$, the coefficient of $z^{n-1}$ in the monic orthogonal polynomial $P_n(z;\vec{t}\,)$, in terms of the auxiliary quantities. We will see in the next section that these identities are crucial for the derivation of the PDE satisfied by the  logarithmic derivative of the Hankel determinant $D_n(\vec{t}\,)$.
\begin{lemma} We have the following two expressions for $p(n,\vec{t}\,)$ in terms of the auxiliary quantities
\begin{align}
p(n,\vec{t}\,)=&-\frac{1}{2}\sum_{j=0}^{n-1}\sum_{k=1}^m R_{j,k}\label{p-1}\\
=&-\frac{1}{2}\biggl(\sum_{k=1}^m r_{n,k}+n\biggr)\sum_{k=1}^m R_{n,k}-\sum_{k=1}^m\frac{r_{n,k}^2}{R_{n,k}}+\sum_{k=1}^m t_k r_{n,k}.\label{p-2}
\end{align}
\end{lemma}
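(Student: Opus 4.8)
The plan is to obtain the two expressions separately: the first follows directly from the telescoping sum rule for the recurrence coefficients, while the second comes from summing the difference equation \eqref{s2'-2} over the singularity index $k$.

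For \eqref{p-1} I would combine the telescoping identity \eqref{sumal}, which reads $\sum_{j=0}^{n-1}\alpha_j=-p(n,\vec{t}\,)$, with the expression \eqref{alR} for the recurrence coefficient $\alpha_n$ in terms of $\{R_{n,k}\}$. Substituting $\alpha_j=\tfrac12\sum_{k=1}^m R_{j,k}$ and interchanging the order of summation yields $p(n,\vec{t}\,)=-\tfrac12\sum_{j=0}^{n-1}\sum_{k=1}^m R_{j,k}$ immediately.

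The substance lies in \eqref{p-2}, which I would derive by summing the identity \eqref{s2'-2} over $k=1,\dots,m$. The key observation, and the step I expect to be the main obstacle, is that both double sums of the form $\sum_{k}\sum_{j\neq k}(\cdots)/(t_k-t_j)$ are antisymmetric under the interchange $k\leftrightarrow j$ and therefore vanish: the numerators $r_{n,k}r_{n,j}$ and $R_{n,k}R_{n-1,j}+R_{n-1,k}R_{n,j}$ are symmetric in $(k,j)$, while the factor $1/(t_k-t_j)$ is antisymmetric, so pairing each ordered pair with its transpose cancels everything. What then survives on the left is $\sum_{k}\sum_{j=0}^{n-1}R_{j,k}+2\sum_k t_k r_{n,k}$, and the double sum over $R$ equals $-2p(n,\vec{t}\,)$ by \eqref{p-1}; what survives on the right is $2\beta_n\sum_k(R_{n,k}+R_{n-1,k})$. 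This produces the intermediate relation
\[
p(n,\vec{t}\,)=\sum_{k=1}^m t_k r_{n,k}-\beta_n\sum_{k=1}^m\bigl(R_{n,k}+R_{n-1,k}\bigr).
\]

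To reach the stated form I would rewrite the two $\beta_n$ terms. Using \eqref{btr} to replace $\beta_n$ in the $R_{n,k}$ sum produces the first term $-\tfrac12\bigl(\sum_k r_{n,k}+n\bigr)\sum_k R_{n,k}$. For the $R_{n-1,k}$ sum, I would invoke the quadratic relation \eqref{s2'-3}, namely $r_{n,k}^2=\beta_n R_{n,k}R_{n-1,k}$, which gives $\beta_n R_{n-1,k}=r_{n,k}^2/R_{n,k}$; summing over $k$ turns $-\beta_n\sum_k R_{n-1,k}$ into $-\sum_k r_{n,k}^2/R_{n,k}$. Collecting the three pieces yields \eqref{p-2}. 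The only care needed is the bookkeeping of the vanishing cross terms; once that symmetry argument is in place, the remainder is direct substitution.
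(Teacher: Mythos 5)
Your proposal is correct and follows the paper's own argument essentially verbatim: \eqref{p-1} from combining \eqref{sumal} with \eqref{alR}, and \eqref{p-2} by summing \eqref{s2'-2} over $k$, killing the cross terms via the symmetric-numerator/antisymmetric-denominator cancellation, and then substituting \eqref{s2'-3}, \eqref{btr} and \eqref{p-1}. The antisymmetry observation you flag as the key step is precisely the one the paper uses, so there is nothing to add.
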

\begin{proof}
Replacing $n$ by $j$ in \eqref{alR} and inserting it into \eqref{sumal}, we obtain \eqref{p-1}.

A summation of \eqref{s2'-2} from $k=1$ to $k=m$ gives us
\begin{equation}\label{sumR-1}
\begin{aligned}
\sum_{k=1}^m\sum_{j=0}^{n-1} R_{j,k}=&\beta_n\sum_{j\neq k}^m\frac{R_{n,k}R_{n-1,j}+R_{n-1,k}R_{n,j}-2 r_{n,k}r_{n,j}}{t_k-t_{j}}\\
&+2\beta_n\sum_{k=1}^m(R_{n,k}+R_{n-1,k})-2\sum_{k=1}^mt_kr_{n,k}.
\end{aligned}
\end{equation}
Note that the first summation term on the right hand side can be split into two parts $\sum_{j>k}$ and $\sum_{j<k}$. By the anti-symmetry of this term in $j$ and $k$, we know that it is zero. Removing $R_{n-1,k}$ from the second term of the right hand side of \eqref{sumR-1} by using \eqref{s2'-3}, in view of \eqref{btr} and \eqref{p-1}, we arrive at \eqref{p-2}.
\end{proof}

\section{Toda equations, Riccati equations and the generalized Painlev\'{e} IV equation}\label{TRPDE}
In this section, we establish differential relations by taking the derivative with respect to $t_k$ on both sides of the orthogonality relation \eqref{m-or} with $j=k=n$ and $j=n, k=n-1$. With these relations and the results in the previous section, we establish Toda equations for the recurrence coefficients and Riccati equations for the auxiliary quantities $\{R_{n,k}, r_{n,k}\}$.

Before concentrating on our weight function \eqref{weight}, we consider the more general case given by \eqref{weight-g}. We have the following differential relations.
\begin{lemma} For monic orthogonal polynomials $P_n(z;\vec{t}\,)$ associated with the weight function \eqref{weight-g}, the derivatives of its $L^2$-norm and the coefficient of $z^{n-1}$ in $P_n(z;\vec{t}\,)$ are directly related to the auxiliary quantities defined through \eqref{Rnrn} by
\begin{subequations}\label{Dhnp}
\begin{align}
\frac{\partial}{\partial t_k}\ln h_n(\vec{t}\,)=-R_{n,k}(\vec{t}\,),\label{Dhn}\\
\frac{\partial}{\partial t_k}p(n,\vec{t}\,)=r_{n,k}(\vec{t}\,),\label{Dp}
\end{align}
for $k=1,\cdots,m$.
\end{subequations}
Since $\beta_n=h_n/h_{n-1}$ \emph {(cf. \eqref{m-alp})} and $\alpha_n=p(n,\vec{t}\,)-p(n+1,\vec{t}\,)$ \emph{(cf. \eqref{bth})}, it follows that
\begin{align}
\frac{\partial}{\partial t_k}\ln \beta_n(\vec{t}\,)=&R_{n-1,k}(\vec{t}\,)-R_{n,k}(\vec{t}\,),\label{Dbt}\\
\frac{\partial}{\partial t_k}\alpha_n(\vec{t}\,)=&r_{n,k}(\vec{t}\,)-r_{n+1,k}(\vec{t}\,),\label{Dal}
\end{align}
for $k=1,\cdots,m$.
\end{lemma}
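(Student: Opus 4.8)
The plan is to obtain all four relations by differentiating the defining integral identities for $h_n(\vec{t}\,)$ and $p(n,\vec{t}\,)$ with respect to $t_k$, exploiting the orthogonality of the $P_n(z;\vec{t}\,)$ together with the $t_k$-derivative of the weight. The only explicit $t_k$-dependence in $w(z;\vec{t}\,)$ enters through the single step function $\theta(z-t_k)$, and since $\partial_{t_k}\theta(z-t_k)=-\delta(z-t_k)$ one has
\[
\frac{\partial}{\partial t_k} w(z;\vec{t}\,) = -\omega_k\, w_0(z)\,\delta(z-t_k) = -\omega_k\, w_0(t_k)\,\delta(z-t_k).
\]
This is the computational heart of the argument; everything else is orthogonality bookkeeping.

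For \eqref{Dhn} I would start from $h_n(\vec{t}\,)=\int_a^b P_n^2(z;\vec{t}\,)\,w(z;\vec{t}\,)\,dz$ and differentiate under the integral sign, producing two pieces. The piece $2\int_a^b P_n\,(\partial_{t_k}P_n)\,w\,dz$ vanishes: because $P_n$ is monic its leading coefficient is the constant $1$, so $\partial_{t_k}P_n$ is a polynomial of degree at most $n-1$ and is therefore orthogonal to $P_n$ against $w$. The remaining piece, using the displayed delta, is exactly $-\omega_k w_0(t_k)P_n^2(t_k;\vec{t}\,)$. Dividing by $h_n$ and comparing with the definition \eqref{defR} gives $\partial_{t_k}\ln h_n=-R_{n,k}$.

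For \eqref{Dp} I would instead differentiate the off-diagonal orthogonality relation $\int_a^b P_n P_{n-1}\,w\,dz=0$. Three terms appear. In $\int_a^b (\partial_{t_k}P_n)\,P_{n-1}\,w\,dz$ the factor $\partial_{t_k}P_n$ has degree at most $n-1$, so when expanded in the basis $\{P_0,\dots,P_{n-1}\}$ only its $P_{n-1}$-component survives the pairing; that component is the coefficient of $z^{n-1}$, namely $\partial_{t_k}p(n,\vec{t}\,)$, so this term equals $\bigl(\partial_{t_k}p(n,\vec{t}\,)\bigr)\,h_{n-1}$. The term $\int_a^b P_n\,(\partial_{t_k}P_{n-1})\,w\,dz$ vanishes since $\partial_{t_k}P_{n-1}$ has degree at most $n-2$. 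The delta term contributes $-\omega_k w_0(t_k)P_n(t_k;\vec{t}\,)P_{n-1}(t_k;\vec{t}\,)$. Setting the sum of the three to zero and recalling \eqref{defr} yields $\partial_{t_k}p(n,\vec{t}\,)=r_{n,k}$.

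Finally, \eqref{Dbt} and \eqref{Dal} are immediate consequences by linearity: from $\beta_n=h_n/h_{n-1}$ one gets $\partial_{t_k}\ln\beta_n=\partial_{t_k}\ln h_n-\partial_{t_k}\ln h_{n-1}=R_{n-1,k}-R_{n,k}$, and from $\alpha_n=p(n,\vec{t}\,)-p(n+1,\vec{t}\,)$ one gets $\partial_{t_k}\alpha_n=r_{n,k}-r_{n+1,k}$. The only genuinely delicate point is the first display: one must justify reading $\partial_{t_k}\theta(z-t_k)$ as $-\delta(z-t_k)$ and differentiating under the integral. This is cleanest if one first inserts the explicit jump decomposition $\omega_0+\sum_j\omega_j\theta(z-t_j)$ and differentiates the resulting piecewise integral by Leibniz's rule at the moving endpoint $t_k$, which reproduces the boundary evaluations $w_0(t_k)P_n^2(t_k;\vec{t}\,)$ and $w_0(t_k)P_n(t_k;\vec{t}\,)P_{n-1}(t_k;\vec{t}\,)$ without appealing to distributions, the smoothness of $w_0$ and of the polynomial coefficients in $\vec{t}$ ensuring that the differentiation is legitimate.
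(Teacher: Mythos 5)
Your proof is correct and follows essentially the same route as the paper: differentiate the weight to get $\partial_{t_k}w=-\omega_k w_0(z)\delta(z-t_k)$, apply this under the integral sign to $h_n=\int_a^b P_n^2\,w\,dz$ and to $0=\int_a^b P_nP_{n-1}\,w\,dz$, and use that $\partial_{t_k}P_n$ has degree at most $n-1$ (with $P_{n-1}$-component $\partial_{t_k}p(n,\vec{t}\,)$), the last two relations following by linearity. You in fact supply the orthogonality bookkeeping and the Leibniz-rule justification that the paper leaves implicit, and your formula for $\partial_{t_k}w$ is the correct one (the paper's displayed version carries a spurious sum over $k$).
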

\begin{proof}
For the weight function \eqref{weight-g}, we have
\[\frac{\partial}{\partial t_k}w(x;\vec{t}\,)=-w_0(x)\sum\limits_{k=1}^m \omega_k\delta(x-t_k),\]
where we make use of the fact that $(d/dy)\theta(y)=\delta(y)$ with $\delta(\cdot)$ denoting the Dirac delta function.
 Differentiating both sides of
\[h_n(\vec{t}\,)=\int_{a}^{b}P_n^2(x;\vec{t}\,)w(x;\vec{t}\,)dx\]
with respect to $t_k$, we get
\[\frac{\partial}{\partial t_k}h_n(\vec{t}\,)=-h_n(\vec{t}\,)R_{n,k}(\vec{t}\,),\]
which gives us \eqref{Dhn}.
Taking the derivative of
\[0=\int_{a}^{b}P_n(x;\vec{t}\,)P_{n-1}(x;\vec{t}\,)w(x;\vec{t}\,)dx\]
with respect to $t_k$, we are led to \eqref{Dp}.
\end{proof}
\begin{remark}
The Hankel determinant associated with the Laguerre weight with jump discontinuities, namely $w_0(z)=z^{\alpha}{\rm e}^{-z}, z\in[0,+\infty)$ in \eqref{weight-g} was described in \cite{LyuChenXu22}. The results presented in Lemma 2.8 therein agree with the ones given in the above lemma.
\end{remark}

Summing over \eqref{Dbt} and \eqref{Dal} for $k=1,\cdots,m$, in view of the expressions \eqref{alR} and \eqref{btr}, we come to the following Toda equations for the recurrence coefficients.
\begin{lemma}\label{Toda-1}
We have
\begin{subequations}\label{Toda}
\begin{align}
\mathcal{L}\ln\beta_n=&2(\alpha_{n-1}-\alpha_n),\label{Toda-bt}\\
\mathcal{L} \alpha_n=&2(\beta_n-\beta_{n+1})+1,\label{Toda-al}
\end{align}
\end{subequations}
where $\mathcal{L}=\sum\limits_{k=1}^m \frac{\partial}{\partial t_k}$, from which follows
the second order differential-difference equation for $\beta_n$
\[\mathcal{L}^2\ln\beta_n=4\left(\beta_{n-1}-2\beta_n+\beta_{n+1}\right).\]
Here $\mathcal{L}^2=\frac{\partial^2}{\partial t_k^2}+2\sum\limits_{j<k}\frac{\partial^2}{\partial t_k \partial t_j}$.
\end{lemma}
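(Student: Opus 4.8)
The plan is to obtain the two first-order Toda equations \eqref{Toda-bt} and \eqref{Toda-al} by summing the single-variable differential relations \eqref{Dbt} and \eqref{Dal} over the jump index $k$, and then to recognize the resulting sums of auxiliary quantities as the recurrence coefficients through the expressions \eqref{alR} and \eqref{btr}. The second-order differential-difference equation then follows by applying the operator $\mathcal{L}$ a second time.

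First I would apply $\mathcal{L}=\sum_{k=1}^m \partial/\partial t_k$ to $\ln\beta_n$. Relation \eqref{Dbt} gives $\mathcal{L}\ln\beta_n=\sum_{k=1}^m\left(R_{n-1,k}-R_{n,k}\right)$, and the expression \eqref{alR}, together with its shift $n\mapsto n-1$, identifies $\sum_{k=1}^m R_{n,k}=2\alpha_n$ and $\sum_{k=1}^m R_{n-1,k}=2\alpha_{n-1}$. Substituting these two identities yields \eqref{Toda-bt} immediately.

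Next, applying $\mathcal{L}$ to $\alpha_n$ and invoking \eqref{Dal} gives $\mathcal{L}\alpha_n=\sum_{k=1}^m\left(r_{n,k}-r_{n+1,k}\right)$. Here I would use \eqref{btr} in the rearranged form $\sum_{k=1}^m r_{n,k}=2\beta_n-n$; shifting $n\mapsto n+1$ gives $\sum_{k=1}^m r_{n+1,k}=2\beta_{n+1}-(n+1)$, and the difference of these two expressions is precisely $2(\beta_n-\beta_{n+1})+1$, which is \eqref{Toda-al}. The one point requiring attention is the explicit additive $n$ in \eqref{btr}: it is exactly what produces the constant $+1$ on the right-hand side of \eqref{Toda-al}, and overlooking it would be the natural slip to avoid.

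Finally, for the second-order equation, I would apply $\mathcal{L}$ to \eqref{Toda-bt} to obtain $\mathcal{L}^2\ln\beta_n=2\left(\mathcal{L}\alpha_{n-1}-\mathcal{L}\alpha_n\right)$, and then substitute \eqref{Toda-al} evaluated at $n$ and at $n-1$. The constant $+1$ cancels between the two contributions, and the remaining terms collapse to $4(\beta_{n-1}-2\beta_n+\beta_{n+1})$. There is no genuine obstacle in this argument; it is pure algebraic bookkeeping built on the already-established relations, and the only care needed is to keep the index shifts $n-1,n,n+1$ consistent throughout.
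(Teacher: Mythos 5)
Your proposal is correct and is exactly the paper's argument: the paper obtains \eqref{Toda} by summing \eqref{Dbt} and \eqref{Dal} over $k=1,\cdots,m$ and invoking \eqref{alR} and \eqref{btr}, with the second-order equation following by applying $\mathcal{L}$ to \eqref{Toda-bt} and substituting \eqref{Toda-al} at indices $n$ and $n-1$. Your bookkeeping of the additive $n$ in \eqref{btr}, which produces the $+1$ in \eqref{Toda-al}, matches the paper's computation.
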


\begin{remark}
For $m=1$ and $m=2$, our Toda equations \eqref{Toda} coincide with (2.32)-(2.33) and (3.21)-(3.22) in \cite{MinChen19} respectively.
\end{remark}

Now we proceed with the derivation of the Riccati equations satisfied by the auxiliary quantities $\{R_{n,k},r_{n,k}\}$. To achieve this, we combine the differential relations for the recurrence coefficients given by \eqref{Dbt} and \eqref{Dal} with the difference identities obtained in the previous section, and make use of the expressions presented in Lemma \ref{albtRr}.
\begin{lemma}
The auxiliary quantities $\{R_{n,k},r_{n,k},k=1,\cdots,m\}$ satisfy the following Riccati equations
\begin{subequations}\label{RicRr}
\begin{align}
\mathcal{L} R_{n,k}=&\biggl(\sum_{j=1}^mR_{n,j}-2t_k\biggr)R_{n,k}+4r_{n,k},\label{Ric-R}\\
\mathcal{L} r_{n,k}=&\frac{2r_{n,k}^2}{R_{n,k}}-\biggl(\sum_{j=1}^m r_{n,j}+n\biggr)R_{n,k},\label{Ric-r}
\end{align}
\end{subequations}
for $k=1,\cdots,m$, where $\mathcal{L}=\sum\limits_{k=1}^m \frac{\partial}{\partial t_k}$.
\end{lemma}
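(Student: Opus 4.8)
The plan is to prove each Riccati equation by first reducing the operator $\mathcal{L}=\sum_{l}\partial/\partial t_l$ acting on a single auxiliary quantity to an ordinary $t_k$-derivative, and only afterwards to bring in the difference identities of Section~2 to reach the stated form. The starting observation, which I would foreground, is that \eqref{Dhn} and \eqref{Dp} say precisely that $R_{n,k}=-\,\partial_{t_k}\ln h_n$ and $r_{n,k}=\partial_{t_k}p(n,\vec t\,)$. In other words, for fixed $n$ the two families $\{R_{n,k}\}_k$ and $\{r_{n,k}\}_k$ are the gradients with respect to $\vec t$ of the single scalar functions $-\ln h_n$ and $p(n,\vec t\,)$.

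First I would exploit this potential representation through the equality of mixed second partials: $\partial_{t_l}R_{n,k}=-\,\partial_{t_l}\partial_{t_k}\ln h_n=-\,\partial_{t_k}\partial_{t_l}\ln h_n=\partial_{t_k}R_{n,l}$, and likewise $\partial_{t_l}r_{n,k}=\partial_{t_k}r_{n,l}$. This symmetry lets me swap the roles of the differentiation index and the label: the operator $\mathcal{L}$ differentiates $R_{n,k}$ in every variable while the label $k$ is held fixed, but by symmetry this equals a single $\partial_{t_k}$ applied to the sum over the label, namely $\mathcal{L}R_{n,k}=\sum_{l}\partial_{t_l}R_{n,k}=\sum_{l}\partial_{t_k}R_{n,l}=\partial_{t_k}\sum_{l}R_{n,l}$. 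Invoking \eqref{alR}, the inner sum is $2\alpha_n$, so \eqref{Dal} gives $\mathcal{L}R_{n,k}=2\,\partial_{t_k}\alpha_n=2\,(r_{n,k}-r_{n+1,k})$. The identical manipulation for $r$, using $\sum_{l}r_{n,l}=2\beta_n-n$ from \eqref{btr} together with \eqref{Dbt}, yields $\mathcal{L}r_{n,k}=\partial_{t_k}(2\beta_n-n)=2\beta_n\,(R_{n-1,k}-R_{n,k})$.

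Then it remains only to eliminate the index-shifted quantities $r_{n+1,k}$ and $R_{n-1,k}$ using the difference identities. Writing \eqref{s1-2} as $r_{n+1,k}=(t_k-\alpha_n)R_{n,k}-r_{n,k}$ and replacing $2\alpha_n$ by $\sum_{j}R_{n,j}$ turns $\mathcal{L}R_{n,k}=2(r_{n,k}-r_{n+1,k})$ into \eqref{Ric-R}; writing \eqref{s2'-3} as $\beta_n R_{n-1,k}=r_{n,k}^2/R_{n,k}$ and replacing $2\beta_n$ by $\sum_{j}r_{n,j}+n$ turns $\mathcal{L}r_{n,k}=2\beta_n(R_{n-1,k}-R_{n,k})$ into \eqref{Ric-r}. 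These last substitutions are purely algebraic.

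I expect the one genuinely delicate point to be the reduction in the middle step. From \eqref{Dbt} and \eqref{Dal} by themselves, summing over $l$ only produces $\sum_{j}\mathcal{L}R_{n,j}$ and $\sum_{j}\mathcal{L}r_{n,j}$, i.e. the \emph{sum} of the claimed right-hand sides, and this is not enough to isolate the per-$k$ equations. What upgrades the summed relation to the individual identities is exactly the cross-symmetry $\partial_{t_l}R_{n,k}=\partial_{t_k}R_{n,l}$, which is invisible at the level of the algebra of $\alpha_n,\beta_n$ but becomes immediate once $R_{n,k}$ and $r_{n,k}$ are recognised as $t_k$-derivatives of a single function. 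The only hypothesis required is enough smoothness of $h_n$ and $p(n,\vec t\,)$ in $\vec t$ to interchange the partial derivatives, which holds for the weight \eqref{weight}. I would therefore state the potential representation \eqref{Dhn}--\eqref{Dp} explicitly as the hinge of the argument, rather than treating it as a by-product.
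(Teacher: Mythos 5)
Your proposal is correct and follows essentially the same route as the paper: both hinge on the mixed-partials symmetry $\partial_{t_j}R_{n,k}=\partial_{t_k}R_{n,j}$ (and its analogue for $r_{n,k}$) coming from the potential representation \eqref{Dhn}--\eqref{Dp}, which converts $\mathcal{L}R_{n,k}$ into $2\,\partial_{t_k}\alpha_n$ and $\mathcal{L}r_{n,k}$ into $2\,\partial_{t_k}\beta_n$, and then both eliminate $r_{n+1,k}$ via \eqref{s1-2} and $R_{n-1,k}$ via \eqref{s2'-3} using \eqref{Dal}, \eqref{Dbt}, \eqref{alR} and \eqref{btr}. Your closing remark correctly identifies the same hinge the paper relies on, so there is nothing to add.
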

\begin{proof}
Noting that $\frac{\partial^2\ln h_n}{\partial t_k\partial t_j}=\frac{\partial^2\ln h_n}{\partial t_j\partial t_k}$, we find from \eqref{Dhn} that
\[\frac{\partial}{\partial t_j}R_{n,k}=\frac{\partial}{\partial t_k}R_{n,j},\]
which together with \eqref{alR} gives us
\begin{align}
\frac{\partial\alpha_n}{\partial t_k}=&\frac{1}{2}\mathcal{L} R_{n,k}.\label{Dal-1}
\end{align}
Similarly, by combining the fact that $\frac{\partial^2 p(n,\vec{t})}{\partial t_k\partial t_j}=\frac{\partial^2 p(n,\vec{t})}{\partial t_j\partial t_k}$ with \eqref{Dp} and \eqref{btr}, we get
\begin{align}\label{Dbt-1}
\frac{\partial\beta_n}{\partial t_k}=&\frac{1}{2}\mathcal{L} r_{n,k}.
\end{align}
Eliminating $r_{n+1,k}$ from \eqref{Dal} and \eqref{s1-2}, and inserting \eqref{Dal-1} and \eqref{alR} into the resulting equation, we come to the Riccati equation for $R_{n,k}$, i.e. \eqref{Ric-R}.

 To derive the Riccati equation for $r_{n,k}$, we rewrite \eqref{Dbt} as
\[\frac{\partial\beta_n}{\partial t_k}=\beta_nR_{n-1,k}-\beta_n R_{n,k}.\]
Removing $R_{n-1,k}$ by using \eqref{s2'-3}, and inserting \eqref{btr} into the resulting expression, in view of \eqref{Dbt-1}, we arrive at \eqref{Ric-r}.
\end{proof}

Solving $r_{n,k}$ from the first Riccati equation \eqref{Ric-R} and substituting it into the second one, we establish second order PDEs for $R_{n,k}$.
\begin{theorem}
The auxiliary quantities \{$R_{n,k}, k=1,\cdots,m$\} satisfy the following coupled second order PDEs
\begin{equation}\label{PDE-R}
\mathcal{L}^2R_{n,k}-\frac{(\mathcal{L} R_{n,k})^2}{2R_{n,k}}=R_{n,k}\Biggl[\frac{3}{2}\biggl(\sum_{j=1}^m R_{n,j}\biggr)^2-2\sum_{j=1}^m(t_k+t_j)R_{n,j}+2(t_k^2-2n-1)\Biggr],
\end{equation}
for $k=1,\cdots,m$. Here
\[\mathcal{L}=\sum\limits_{k=1}^m \frac{\partial}{\partial{t_k}},\qquad\qquad\mathcal{L}^2=\sum\limits_{k=1}^m \frac{\partial^2}{\partial t_k^2}+2\sum\limits_{j<k}\frac{\partial^2}{\partial t_k \partial t_j}.\]
\end{theorem}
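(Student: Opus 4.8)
The plan is to eliminate the quantity $r_{n,k}$ entirely and be left with a closed second order PDE in the $R_{n,k}$ alone, using only the two Riccati equations \eqref{RicRr}. The first equation \eqref{Ric-R} is linear in $r_{n,k}$, so I would read it as a definition of $r_{n,k}$ in terms of $R$-data,
\[
4r_{n,k}=\mathcal{L} R_{n,k}-\Big(\sum_{j=1}^m R_{n,j}-2t_k\Big)R_{n,k},
\]
and then apply $\mathcal{L}$ to both sides to obtain $\mathcal{L} r_{n,k}$ in terms of $\mathcal{L}^2 R_{n,k}$, $\mathcal{L} R_{n,k}$ and $\mathcal{L}\sum_{j=1}^m R_{n,j}$. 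The one point requiring care here is that $\mathcal{L}=\sum_{k=1}^m\partial/\partial t_k$ differentiates with respect to the very variables $t_k$ that appear in the equation, so $\mathcal{L} t_k=1$; this produces an extra additive contribution $2R_{n,k}$ that must not be dropped (it is what eventually supplies the $-1$ inside $2(t_k^2-2n-1)$).

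Next I would substitute this expression for $\mathcal{L} r_{n,k}$ into the second Riccati equation \eqref{Ric-r}. Its right-hand side still refers to $r$ through two objects: the square $r_{n,k}^2$ and the global sum $\sum_{j=1}^m r_{n,j}$. The square is handled by squaring the boxed formula for $4r_{n,k}$ and dividing by $R_{n,k}$, which yields the characteristic term $(\mathcal{L} R_{n,k})^2/(2R_{n,k})$ together with cross terms. The sum is handled by summing \eqref{Ric-R} over $k=1,\dots,m$, giving
\[
4\sum_{k=1}^m r_{n,k}=\mathcal{L}\sum_{k=1}^m R_{n,k}-\Big(\sum_{j=1}^m R_{n,j}\Big)^2+2\sum_{k=1}^m t_k R_{n,k},
\]
so that $\sum_{j=1}^m r_{n,j}+n$ is expressed purely in $R$-quantities. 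At this stage everything on both sides is written in terms of $R_{n,k}$, the global sums $\sum_j R_{n,j}$ and $\sum_j t_j R_{n,j}$, and $t_k$.

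The decisive simplification, which I expect to be the crux, is a pair of cancellations of precisely the terms that would otherwise obstruct closure. First, the term $R_{n,k}\,\mathcal{L}\sum_{j=1}^m R_{n,j}$ appears with the same sign on both sides — on one side from differentiating \eqref{Ric-R}, on the other from the $\sum_j r_{n,j}$ piece after using the summed equation — and so drops out. Second, the cross term $\big(\sum_{j=1}^m R_{n,j}-2t_k\big)\mathcal{L} R_{n,k}$ produced by $r_{n,k}^2/R_{n,k}$ cancels the identical term coming from $\mathcal{L} r_{n,k}$. Both cancelled terms involve an undesired application of $\mathcal{L}$, so their disappearance is exactly what makes the result a genuine PDE in $R_{n,k}$ rather than a higher-order or non-closed relation.

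After these cancellations the surviving terms are algebraic, and collecting them gives $\mathcal{L}^2 R_{n,k}-(\mathcal{L} R_{n,k})^2/(2R_{n,k})$ on the left and $R_{n,k}$ times a polynomial in the $R$'s and $t$'s on the right. Rewriting $-2t_k\sum_j R_{n,j}-2\sum_j t_j R_{n,j}=-2\sum_{j=1}^m(t_k+t_j)R_{n,j}$ and $2t_k^2-4n-2=2(t_k^2-2n-1)$, together with the $\tfrac12\big(\sum_j R_{n,j}\big)^2$ from the squaring and the $\big(\sum_j R_{n,j}\big)^2$ from the summed equation combining into $\tfrac32\big(\sum_j R_{n,j}\big)^2$, reproduces \eqref{PDE-R} verbatim. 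Beyond the two Riccati equations no further input is required, so the theorem is a direct consequence of material already established; the only real difficulty is the bookkeeping — keeping the single-index quantities distinct from the global sums and retaining the $\mathcal{L} t_k=1$ contribution throughout.
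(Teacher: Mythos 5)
Your proposal is correct and follows essentially the same route as the paper: the paper also solves the first Riccati equation \eqref{Ric-R} for $r_{n,k}$, substitutes into \eqref{Ric-r}, and collects terms to obtain \eqref{PDE-R}. The cancellations you identify (of $R_{n,k}\,\mathcal{L}\sum_j R_{n,j}$ and of $(\sum_j R_{n,j}-2t_k)\mathcal{L} R_{n,k}$) and the $\mathcal{L}t_k=1$ contribution all check out, so the derivation goes through exactly as you describe.
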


\begin{remark}
When $m=1$, we have $k=1$. With $R_{n,1}(\vec{t}\,)$ replaced by $R_n(t_1)$, equations \eqref{PDE-R} are reduced to
\[R''_{n}(t_1)-\frac{( R'_{n}(t_1))^2}{2R_{n}(t_1)}=R_{n}(t_1)\biggl[\frac{3}{2} R_{n}^2(t_1)-2t_1R_{n}(t_1)+2(t_1^2-2n-1)\biggr],\]
which is accord with (27) of \cite{ChenP} and  (2.37) of \cite{MinChen19}. As was pointed out therein, the above equation can be transformed into a particular $P_{IV}$ satisfied by $R_n(-t_1)$. In this sense, we may treat \eqref{PDE-R} as an $m$-variable generalization of a $P_{IV}$.

When $m=2$, the weight function has two jump discontinuities and the associated Hankel determinant $D_n(t_1,t_2)$ was studied in \cite{MinChen19}. However, equations \eqref{PDE-R} and the Riccati equation \eqref{RicRr} were not displayed therein.
\end{remark}

\section{Generalized $\sigma$-form of a Painlev\'{e} IV equation}
In this section, we will make use of the results obtained in the previous sections to establish the PDE satisfied by the logarithmic derivative of the Hankel determinant
\begin{align*}
D_n\left(\,\vec{t}\,\right)=\det\left(\int_{-\infty}^{\infty} x^{i+j}w(x;\vec{t}\,)dx\right)_{i,j=0}^{n-1},
\end{align*}
where the weight function is given by \eqref{weight} reading
\[
w(x;\vec{t}\,)={\rm e}^{-x^2}\biggl(\omega_0+\sum_{k=1}^m \omega_k\cdot\theta(x-t_k)\biggr).\]

According to \eqref{Dprodh}, i.e.
\begin{align*}
D_n(\vec{t}\,)=\prod_{k=0}^{n-1}h_k(\vec{t}\,),
\end{align*}
in view of \eqref{Dhn} and \eqref{p-1}, we find
\begin{align}
\sigma_n(\vec{t}\,):=&\sum\limits_{k=1}^m \frac{\partial}{\partial{t_k}}\ln D_n(\vec{t}\,)\label{defs}\\
=&-\sum_{k=1}^m\sum_{j=0}^{n-1}R_{j,k}(\vec{t}\,)\nonumber\\
=&2p(n,\vec{t}\,).\label{sig-p}
\end{align}
These relations motivate us to study $\sigma_n(\vec{t}\,)$ by using the equations involving $p(n,\vec{t}\,)$ which are established in the previous sections. We find that $\sigma_n$ and the auxiliary quantities can express each other.
\begin{lemma} $\sigma_n(\vec{t}\,)$ is expressed in terms of the auxiliary quantities by
\begin{align}\label{sig-rR}
\sigma_n=-\biggl(\sum_{k=1}^m r_{n,k}+n\biggr)\sum_{k=1}^m R_{n,k}-2\sum_{k=1}^m\frac{r_{n,k}^2}{R_{n,k}}+2\sum_{k=1}^m t_kr_{n,k}.
\end{align}
In turn, the auxiliary quantities are expressed in terms of $\sigma_n$ and its derivatives by
\begin{subequations}\label{rR-sigma}
\begin{gather}\label{r-sig}
r_{n,k}=\frac{1}{2}\cdot\frac{\partial\sigma_n}{\partial t_k},
\end{gather}
\begin{gather}\label{R-sig}
R_{n,k}=\frac{1}{2(\mathcal{L}\sigma_n+2n)}\Biggl[-\biggl(\frac{\partial^2\sigma_n}{\partial{t_k^2}}+\sum\limits_{\substack{j=1 \\ j\neq k}}^m\frac{\partial^2\sigma_n}{\partial{t_k}\partial{t_j}}\biggr)+{\rm sgn}(\omega_k)\sqrt{\Delta_k(\vec{t}\,)}\Biggr],
\end{gather}
\end{subequations}
for $k=1,\cdots,m$, where $\mathcal{L}=\sum\limits_{k=1}^m \frac{\partial}{\partial{t_k}}$ and ${\rm sgn}(\omega_k)$ is the sign function of $\omega_k$ which is $-1$ for negative $\omega_k$, $1$ for positive $\omega_k$ and $0$ for $\omega_k=0$. Here $\Delta_k(\vec{t}\,)$ is defined by
\begin{align}\label{def-Delta}
\Delta_k(\vec{t}\,):=\biggl(\frac{\partial^2\sigma_n}{\partial{t_k^2}}+\sum\limits_{\substack{j=1 \\ j\neq k}}^m\frac{\partial^2\sigma_n}{\partial{t_k}\partial{t_j}}\biggr)^2+4\left(\mathcal{L}\sigma_n+2n\right)\left(\frac{\partial\sigma_n}{\partial t_k}\right)^2.
\end{align}
\end{lemma}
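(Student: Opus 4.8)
The plan is to derive all three identities from the single relation $\sigma_n=2p(n,\vec{t}\,)$ recorded in \eqref{sig-p}, together with the differential relation \eqref{Dp} and the Riccati equation \eqref{Ric-r}. The identities \eqref{sig-rR} and \eqref{r-sig} are essentially immediate: substituting the second expression \eqref{p-2} for $p(n,\vec{t}\,)$ into $\sigma_n=2p(n,\vec{t}\,)$ and multiplying through by $2$ yields \eqref{sig-rR}, while differentiating $\sigma_n=2p(n,\vec{t}\,)$ with respect to $t_k$ and invoking \eqref{Dp} gives $\tfrac{\partial}{\partial t_k}\sigma_n=2\,\tfrac{\partial}{\partial t_k}p(n,\vec{t}\,)=2r_{n,k}$, which is \eqref{r-sig}.

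The substance is \eqref{R-sig}, and the strategy is to turn the Riccati equation \eqref{Ric-r} into a quadratic equation for $R_{n,k}$ whose coefficients are built from $\sigma_n$. First I would record two bookkeeping identities. Applying $\mathcal{L}$ to \eqref{r-sig} gives
\begin{align*}
\mathcal{L} r_{n,k}=\frac{1}{2}\sum_{j=1}^m\frac{\partial^2\sigma_n}{\partial t_j\partial t_k}=\frac{1}{2}\biggl(\frac{\partial^2\sigma_n}{\partial t_k^2}+\sum_{\substack{j=1 \\ j\neq k}}^m\frac{\partial^2\sigma_n}{\partial t_k\partial t_j}\biggr),
\end{align*}
so the bracketed second-derivative combination appearing in \eqref{R-sig}, which I abbreviate $S_k$, equals $2\,\mathcal{L} r_{n,k}$. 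Summing \eqref{r-sig} over $k$ and comparing with \eqref{btr} gives $\mathcal{L}\sigma_n=2\sum_k r_{n,k}=4\beta_n-2n$, hence $\mathcal{L}\sigma_n+2n=4\beta_n$ and $\sum_j r_{n,j}+n=\tfrac12(\mathcal{L}\sigma_n+2n)$.

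Substituting these facts and $r_{n,k}=\tfrac12\tfrac{\partial}{\partial t_k}\sigma_n$ into \eqref{Ric-r} and clearing the denominator $R_{n,k}$ yields the quadratic
\begin{align*}
(\mathcal{L}\sigma_n+2n)\,R_{n,k}^2+S_k\,R_{n,k}-\Bigl(\frac{\partial\sigma_n}{\partial t_k}\Bigr)^2=0.
\end{align*}
The quadratic formula then gives $R_{n,k}=\bigl[-S_k\pm\sqrt{\Delta_k}\bigr]/\bigl(2(\mathcal{L}\sigma_n+2n)\bigr)$, where the discriminant $S_k^2+4(\mathcal{L}\sigma_n+2n)\bigl(\tfrac{\partial}{\partial t_k}\sigma_n\bigr)^2$ is precisely $\Delta_k$ of \eqref{def-Delta}. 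This reproduces \eqref{R-sig} up to the choice of sign before the square root; note that dividing is legitimate since $\mathcal{L}\sigma_n+2n=4\beta_n\neq0$.

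The main obstacle is to justify that the correct branch is ${\rm sgn}(\omega_k)\sqrt{\Delta_k}$, which cannot come from the algebra alone and must be read off from the analytic meaning of $R_{n,k}$. Here I would argue by signs. Since $\beta_n=h_n/h_{n-1}>0$ by \eqref{bth}, the denominator $2(\mathcal{L}\sigma_n+2n)=8\beta_n$ is positive, and because $\Delta_k\geq S_k^2$ the root with $+\sqrt{\Delta_k}$ is nonnegative while the root with $-\sqrt{\Delta_k}$ is nonpositive. On the other hand, the defining formula \eqref{defR} gives $R_{n,k}=\omega_k\,w_0(t_k)P_n^2(t_k)/h_n$ with $w_0(t_k)={\rm e}^{-t_k^2}>0$ and $h_n>0$ (the weight being nonnegative under the hypothesis $\sum_{k=0}^\ell\omega_k\geq0$), so that ${\rm sgn}(R_{n,k})={\rm sgn}(\omega_k)$ wherever $P_n(t_k)\neq0$. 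Matching the sign of $R_{n,k}$ to the admissible branch forces the $+$ root when $\omega_k>0$ and the $-$ root when $\omega_k<0$, i.e. the coefficient ${\rm sgn}(\omega_k)$; when $\omega_k=0$ one has $R_{n,k}\equiv0$, $S_k=0$ and $\Delta_k=0$, which is consistent with the stated formula. This establishes \eqref{R-sig} and completes the proof.
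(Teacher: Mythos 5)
Your handling of \eqref{sig-rR} and \eqref{r-sig} is exactly the paper's: both follow from $\sigma_n=2p(n,\vec{t}\,)$ combined with \eqref{p-2} and \eqref{Dp}. For \eqref{R-sig} you also arrive at the same quadratic, $(\mathcal{L}\sigma_n+2n)R_{n,k}^2+S_kR_{n,k}-\bigl(\partial\sigma_n/\partial t_k\bigr)^2=0$, where $S_k$ denotes the bracketed second-derivative sum. The paper gets it by eliminating $R_{n-1,k}$ between \eqref{Dbt} and \eqref{s2'-3} and then using \eqref{r-sig} and \eqref{bt-Dsig}, while you quote the Riccati equation \eqref{Ric-r}; since \eqref{Ric-r} was itself derived in the paper from precisely those two identities, this difference is cosmetic, and your bookkeeping identities $\mathcal{L}\sigma_n+2n=4\beta_n$ and $S_k=2\mathcal{L}r_{n,k}$ are the paper's \eqref{bt-Dsig} and \eqref{Dbt-1}.

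The one substantive divergence is the branch selection, and there your argument has a genuine, though reparable, gap. You fix the sign by matching ${\rm sgn}(R_{n,k})$ against the observation that the $+$ root is nonnegative and the $-$ root is nonpositive, and, as you yourself concede, this works only ``wherever $P_n(t_k)\neq 0$.'' But the lemma asserts \eqref{R-sig} at every $\vec{t}$. On the exceptional set where $P_n(t_k;\vec{t}\,)=0$ (and $\omega_k\neq0$) one has $R_{n,k}=r_{n,k}=0$, hence $\partial\sigma_n/\partial t_k=0$ and $\Delta_k=S_k^2$; the quadratic then has the two roots $0$ and $-S_k/(\mathcal{L}\sigma_n+2n)$, sign matching cannot distinguish them, and you must still prove that the formula with coefficient ${\rm sgn}(\omega_k)$ returns the root $0$, i.e.\ that ${\rm sgn}(S_k)={\rm sgn}(\omega_k)$ there --- which does not follow from anything you wrote. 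The paper's proof is arranged to avoid this issue: it substitutes \eqref{Dbt} back into the root formula \eqref{Rsol-1}, turning the two branches into $R_{n,k}+R_{n-1,k}=\pm\frac{1}{4\beta_n}\sqrt{\Delta_k(\vec{t}\,)}$ (its \eqref{Rsol-2}); since consecutive orthogonal polynomials have no common zero (if $P_n(t_k)=P_{n-1}(t_k)=0$, the recurrence \eqref{OP-recu} forces $P_j(t_k)=0$ for all $j<n$, contradicting $P_0\equiv1$), the left-hand side has strict sign ${\rm sgn}(\omega_k)$ for $\omega_k\neq0$ at every $\vec{t}$, which pins the branch everywhere at once. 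Your argument can be closed either by this same substitution, or by noting that on the exceptional set \eqref{Dbt} gives $S_k=4\,\partial\beta_n/\partial t_k=4\beta_nR_{n-1,k}$, whose sign is ${\rm sgn}(\omega_k)$ because $P_{n-1}(t_k)\neq0$ there, or by a continuity argument off that set; as written, the step is incomplete.
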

\begin{proof}
Combining \eqref{sig-p} with \eqref{p-2} and \eqref{Dp} gives us \eqref{sig-rR} and \eqref{r-sig} respectively.

To continue, we insert \eqref{r-sig} into \eqref{btr} and get
\begin{align}\label{bt-Dsig}
\beta_n=\frac{1}{4}\mathcal{L}\sigma_n+\frac{n}{2}.
\end{align}
%so that
%\[\mathcal{L}\beta_n=\frac{1}{4}\mathcal{L}^2\sigma_n.\]
Getting rid of $R_{n-1,k}$ from \eqref{Dbt} by using \eqref{s2'-3}, we find
\begin{align*}
\beta_nR_{n,k}^2+\frac{\partial\beta_n}{\partial t_k}\cdot R_{n,k}-r_{n,k}^2=0,
\end{align*}
for $k=1,\cdots,m$.
Solving $R_{n,k}$ from this equation, in view of \eqref{r-sig} and \eqref{bt-Dsig}, we obtain
\begin{align}\label{Rsol-1}
R_{n,k}=&\frac{1}{2\beta_n}\left[-\frac{\partial\beta_n}{\partial t_k}\pm\frac{1}{4}\sqrt{\Delta_k(\vec{t}\,)}\right],
\end{align}
where $\Delta_k(\vec{t}\,)$ is defined by \eqref{def-Delta}.

Now we discuss the sign before the square root of \eqref{Rsol-1}. Replacing the term $\frac{\partial\beta_n}{\partial t_k}$ in \eqref{Rsol-1} by using \eqref{Dbt}, we get
\begin{align}\label{Rsol-2}
R_{n,k}+R_{n-1,k}=\pm\frac{1}{4\beta_n}\sqrt{\Delta_k(\vec{t}\,)}.
\end{align}
By the definition of $R_{j,k}$ given by \eqref{defR}, we know that
\[
{\rm sgn}\left(R_{n,k}+R_{n-1,k}\right)={\rm sgn} (\omega_k),\]
which indicates that the sign before the square root of \eqref{Rsol-2} and hence of \eqref{Rsol-1} is ${\rm sgn} (\omega_k)$, namely,
\begin{align*}
R_{n,k}=\frac{1}{2\beta_n}\left[-\frac{\partial\beta_n}{\partial t_k}+{\rm sgn}(\omega_k)\cdot\frac{1}{4}\sqrt{\Delta_k(\vec{t}\,)}\right].
\end{align*}
Inserting \eqref{bt-Dsig} into the above expression, we arrive at \eqref{R-sig}.
\end{proof}

Substituting \eqref{rR-sigma} back into \eqref{sig-rR}, we obtain the PDE satisfied by $\sigma_n$.
\begin{theorem}
$\sigma_n(\vec{t}\,)$ satisfies the following second order PDE
\begin{equation}\label{PDE-sig}
\sum_{k=1}^m {\rm sgn}(\omega_k)\sqrt{\Delta_k(\vec{t}\,)}=2\biggl(\sum_{k=1}^mt_k\frac{\partial\sigma_n}{\partial t_k}-\sigma_n\biggr),
\end{equation}
where $\Delta_k(\vec{t}\,)$ is given by \eqref{def-Delta}.
\end{theorem}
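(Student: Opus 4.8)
The preceding lemma has already supplied both directions of the dictionary between $\sigma_n$ and the auxiliary quantities: equation \eqref{sig-rR} writes $\sigma_n$ in terms of $\{R_{n,k},r_{n,k}\}$, while \eqref{r-sig} and \eqref{R-sig} write $\{R_{n,k},r_{n,k}\}$ back in terms of $\sigma_n$ and its first and second derivatives. The plan is simply to feed \eqref{r-sig} and \eqref{R-sig} into \eqref{sig-rR} and read off a relation involving $\sigma_n$ alone. The only genuinely awkward term in \eqref{sig-rR} is $\sum_k r_{n,k}^2/R_{n,k}$, since inserting the square-root expression \eqref{R-sig} into a denominator would be unwieldy. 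So the first thing I would do is linearize this term in $R_{n,k}$ before substituting anything.

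Concretely, I would recall the quadratic relation $\beta_n R_{n,k}^2 + (\partial\beta_n/\partial t_k)R_{n,k} - r_{n,k}^2 = 0$ established in the proof of the previous lemma (obtained by eliminating $R_{n-1,k}$ from \eqref{Dbt} via \eqref{s2'-3}). Dividing by $R_{n,k}$ gives $r_{n,k}^2/R_{n,k} = \beta_n R_{n,k} + \partial\beta_n/\partial t_k$. Using in addition $\sum_k r_{n,k} + n = 2\beta_n$ from \eqref{btr}, the expression \eqref{sig-rR} collapses to $\sigma_n = -4\beta_n\sum_k R_{n,k} - 2\mathcal{L}\beta_n + 2\sum_k t_k r_{n,k}$, where $\mathcal{L}\beta_n = \sum_k \partial\beta_n/\partial t_k$. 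Now only a single power of $R_{n,k}$ remains, and I would substitute the solved form \eqref{Rsol-1}, namely $R_{n,k} = (2\beta_n)^{-1}[-\partial\beta_n/\partial t_k + \mathrm{sgn}(\omega_k)\tfrac14\sqrt{\Delta_k}]$, whose sign ambiguity has already been resolved in the previous lemma. Summing over $k$, the term $-4\beta_n\sum_k R_{n,k}$ produces exactly $2\mathcal{L}\beta_n - \tfrac12\sum_k \mathrm{sgn}(\omega_k)\sqrt{\Delta_k}$, so the two $\mathcal{L}\beta_n$ contributions cancel. Finally I would replace $r_{n,k}$ by $\tfrac12\,\partial\sigma_n/\partial t_k$ using \eqref{r-sig}, so that $2\sum_k t_k r_{n,k} = \sum_k t_k\,\partial\sigma_n/\partial t_k$, and rearrange to obtain \eqref{PDE-sig}.

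I do not expect a serious obstacle: once $r_{n,k}^2/R_{n,k}$ is linearized, the rest is bookkeeping. The one point needing care is the cancellation of the $\mathcal{L}\beta_n$ terms, which is precisely what makes the final identity close up into a clean first-order relation between $\sigma_n$ and its gradient; this cancellation is not accidental but forced by the quadratic relation together with \eqref{Rsol-1}. The other place to be attentive is the sign in front of each square root, but since $\mathrm{sgn}(R_{n,k}+R_{n-1,k}) = \mathrm{sgn}(\omega_k)$ was used to fix it in the proof of the preceding lemma, I can import that choice directly and no separate sign analysis is needed.
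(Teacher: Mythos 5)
Your proposal is correct and follows essentially the same route as the paper's own proof: linearizing $r_{n,k}^2/R_{n,k}=\beta_nR_{n,k}+\partial\beta_n/\partial t_k$ via \eqref{s2'-3} and \eqref{Dbt}, reducing \eqref{sig-rR} to $\sigma_n=-2\mathcal{L}\beta_n-4\beta_n\sum_kR_{n,k}+2\sum_kt_kr_{n,k}$, and then substituting \eqref{rR-sigma} (equivalently \eqref{Rsol-1} with the sign fixed by ${\rm sgn}(\omega_k)$) together with \eqref{bt-Dsig}. The cancellation of the $\mathcal{L}\beta_n$ terms that you highlight is exactly what the paper's final substitution step carries out implicitly.
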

\begin{proof}
Using \eqref{s2'-3} to eliminate $R_{n-1,k}$ from \eqref{Dbt}, we get
\begin{align*}
\frac{r_{n,k}^2}{R_{n,k}}=\beta_nR_{n,k}+\frac{\partial\beta_n}{\partial t_k},\qquad k=1,\cdots,m.
\end{align*}
A summation of these equations, in view of \eqref{btr}, gives us
\[\sum_{k=1}^m\frac{r_{n,k}^2}{R_{n,k}}=\mathcal{L}\beta_n+\frac{1}{2}\biggl(\sum_{k=1}^m r_{n,k}+n\biggr)\sum_{k=1}^mR_{n,k}.\]
Inserting it into \eqref{sig-rR} yields
\begin{align*}
\sigma_n=-2\mathcal{L}\beta_n-2\biggl(\sum_{k=1}^m r_{n,k}+n\biggr)\sum_{k=1}^m R_{n,k}+2\sum_{k=1}^m t_kr_{n,k}.
\end{align*}
Plugging \eqref{rR-sigma} and \eqref{bt-Dsig} into this equation, we are led to \eqref{PDE-sig}.
\end{proof}
\begin{remark}
When $m=1$, the weight function \eqref{weight} has only one jump and the associated Hankel determinant $D_n(t_1)$ denotes the smallest eigenvalue distribution of GUE with $\omega_0=0, \omega_1=1$ and the largest eigenvalue distribution with $\omega_0=1,\omega_1=-1$, up to a constant. Equation \eqref{PDE-sig} is now reduced to, by clearing the square root,
\begin{equation*}
\left(\sigma_n''\right)^2+4\left(\sigma_n'+2n\right)\left(\sigma_n'\right)^2=4\left(t_1\sigma_n'-\sigma_n\right)^2.
\end{equation*}
It agrees with (5.14) of \cite{TW163} and (2.41) of \cite{MinChen19} which were identified to be the Jimbo-Miwa-Okamoto \cite{JimboMiwa1981} $\sigma$-form of a particular $P_{IV}$. In this sense, we may regard \eqref{PDE-sig} as an $m$-variable generalization of the $\sigma$-form of a $P_{IV}$.

When $m=2$, our weight function \eqref{weight} has jumps at $t_1$ and $t_2$. With the parameters $\omega_0=0,\omega_1=1,\omega_2=-1$ or $\omega_0=1,\omega_1=-1,\omega_2=1$, the associated Hankel determinant $D_n(t_1,t_2)$ represents the probability that all the eigenvalues of GUE lie in or outside the interval $(t_1,t_2)$ respectively, up to a constant. Equation \eqref{PDE-sig} is now reduced to
\[\biggl[4\biggl(t_1\frac{\partial\sigma_n}{\partial t_1}+t_2\frac{\partial\sigma_n}{\partial t_2}-\sigma_n\biggr)^2-\Delta_1-\Delta_2\biggr]^2=4\Delta_1\Delta_2,\]
where $\Delta_1$ and $\Delta_2$ are given by \eqref{def-Delta} with $k=1,2$. This equation coincides with (3.31) of \cite{MinChen19}.
\end{remark}

In the next section, we assume $t_j=c_j+t_1$ for $j=2,\cdots,m$. Given $c_2<\cdots<c_m$, our weight function \eqref{weight} and the associated Hankel determinant \eqref{def-Dn} have only one variable $t_1$. We will make use of the aforementioned results to show that the Hankel determinant is the $\tau$-function of a $P_{IV}$ and it admits an integral representation in terms of the auxiliary quantities $\{R_{n,k}\}$ which satisfy coupled ODEs in $t_1$. Moreover, by considering the RH problem satisfied by the monic orthogonal polynomials, we build relationships between the auxiliary quantities $\{R_{n,k},r_{n,k}\}$ introduced in the ladder operators and the solutions of a coupled $P_{IV}$ system.
\section{The coupled Painlev\'{e} IV system}
We introduce new variables
\[x:=t_1,\qquad\qquad c_j:=t_j-t_1, \quad j=2,\cdots,m\]
and $c_1:=0$. Since $t_1<\cdots<t_m$, we have $c_2<\cdots<c_m$. Write \[\vec{c}=(c_1,c_2,\cdots,c_m)=(0,t_2-t_1,\cdots,t_m-t_1).\] We have \[\vec{t}=(t_1,t_2,\cdots,t_m)=(x,c_2+x,\cdots,c_m+x).\]
In this section, we assume that $c_k, j=2,\cdots,m$ are constants. Hence the quantities which are functions of $\vec{t}$ now become functions of $x$.

\subsection{$\tau$-function of a Painlev\'{e} IV equation and coupled ODEs}
Denote the Hankel determinant $D_n(\vec{t}\,)$ by $D_n(x;\vec{c}\,)$ and define
\begin{align}\label{defDhat}
\hat{D}_n(x;\vec{c}):={\rm e}^{nx^2}D_n(x;\vec{c}).
\end{align}
According to Lemma \ref{albtRr} and the differential equations presented in Section \ref{TRPDE}, we show that $\hat{D}_n(x;\vec{c})$ is the $\tau$-function of a $P_{IV}$ and the auxiliary quantities $\{R_{n,k}\}$ defined by \eqref{defR} satisfy a system of second order ODEs in the variable $x$. Moreover, we establish the integral representation for $\hat{D}_n(x;\vec{c}\,)$ in terms of $\{R_{n,k}\}$ and their first order derivatives with respect to $x$.
\begin{theorem}\label{ThRode}
The quantity $\hat{D}_n(x;\vec{c})$ satisfies the following Toda molecule equation
\begin{align}\label{Toda-molecule}
\frac{d^2}{dx^2}\ln \hat{D}_n=4\frac{\hat{D}_{n+1}\hat{D}_{n-1}}{\hat{D}_n^2},
\end{align}
which indicates that $\hat{D}_n$ is the $\tau$-function of a $P_{IV}$. Moreover, $\hat{D}_n(x;\vec{c}\,)$ has the following integral representation:
\begin{equation}\label{Dint}
\begin{aligned}
\frac{\hat{D}_n(x;\vec{c}\,)}{\hat{D}_n(0;\vec{c})}=&{\rm e}^{nx^2}\exp\biggl[\frac{1}{4}\int_0^{x}\biggl(-\sum_{k=1}^m\frac{(R_{n,k}')^2 }{2R_{n,k}}+\frac{1}{2}\biggl(\sum_{k=1}^m R_{n,k}\biggr)^3+\sum_{k=1}^m 2(c_k+x)^2 R_{n,k}\biggr.\\
\biggl.&\qquad\qquad\quad\qquad-\biggl(\sum_{k=1}^m 2(c_k+x)R_{n,k}+4n\biggr)\cdot\sum_{j=1}^mR_{n,j}\biggr)dy\biggr],
\end{aligned}
\end{equation}
where $R_{n,k}=R_{n,k}(y;\vec{c}\,)=R_{n,k}(\vec{t}\,)|_{t_1=y,t_j=c_j+y,j=2,\cdots,m}$, $'$ denotes $d/dy$ and $\{R_{n,k}(\vec{t})=R_{n,k}(x;\vec{c})\}$ satisfy the following coupled ODEs:
\begin{equation}\label{Rode}
\begin{aligned}
&\frac{d^2}{dx^2}R_{n,k}-\frac{1}{2R_{n,k}}\biggl(\frac{d}{dx} R_{n,k}\biggr)^2\\
&=R_{n,k}\biggl[\frac{3}{2}\bigg(\sum_{j=1}^m R_{n,j}\bigg)^2-2\sum_{j=1}^m(c_k+c_j+2x)R_{n,j}+2((c_k+x)^2-2n-1)\biggr],
\end{aligned}
\end{equation}
for $k=1,\cdots,m$.
\end{theorem}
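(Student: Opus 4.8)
The plan rests on one structural observation. Under the reparametrization $t_1=x$, $t_j=c_j+x$ with the $c_j$ held fixed, every quantity that depended on $\vec{t}$ becomes a function of the single variable $x$, and since $\partial t_k/\partial x=1$ for all $k$, the total derivative collapses onto the operator already analysed in Sections \ref{TRPDE} and the one preceding it:
\[
\frac{d}{dx}=\sum_{k=1}^m\frac{\partial}{\partial t_k}=\mathcal{L},\qquad \frac{d^2}{dx^2}=\mathcal{L}^2 .
\]
Granting this, the three assertions largely reduce to re-reading the earlier $\mathcal{L}$-identities in the variable $x$.

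First I would prove the Toda molecule equation \eqref{Toda-molecule}. From $\ln\hat{D}_n=nx^2+\ln D_n$ and the definition \eqref{defs} of $\sigma_n$ one gets $\tfrac{d}{dx}\ln\hat{D}_n=2nx+\sigma_n$, hence $\tfrac{d^2}{dx^2}\ln\hat{D}_n=2n+\mathcal{L}\sigma_n$. Substituting \eqref{bt-Dsig} in the form $\mathcal{L}\sigma_n=4\beta_n-2n$ gives $\tfrac{d^2}{dx^2}\ln\hat{D}_n=4\beta_n$. Finally, the prefactor ${\rm e}^{nx^2}$ cancels in the ratio $\hat{D}_{n+1}\hat{D}_{n-1}/\hat{D}_n^2=D_{n+1}D_{n-1}/D_n^2$, which equals $\beta_n$ by \eqref{btD}; combining the two yields \eqref{Toda-molecule}. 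The identification of $\hat{D}_n$ as a $P_{IV}$ $\tau$-function then follows from the standard correspondence between the Toda molecule equation and the Painlev\'{e} IV $\tau$-sequence.

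The coupled ODEs \eqref{Rode} are then immediate: they are exactly the PDE \eqref{PDE-R} after replacing $\mathcal{L}$ and $\mathcal{L}^2$ by $d/dx$ and $d^2/dx^2$ and writing $t_k=c_k+x$, $t_j=c_j+x$, so that $t_k+t_j=c_k+c_j+2x$ and $t_k^2=(c_k+x)^2$. No further work is needed here.

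The only genuinely computational step is the integral representation \eqref{Dint}. Integrating $\tfrac{d}{dx}\ln\hat{D}_n=2nx+\sigma_n$ from $0$ to $x$ gives $\ln\!\big[\hat{D}_n(x;\vec{c})/\hat{D}_n(0;\vec{c})\big]=nx^2+\int_0^x\sigma_n\,dy$, so it remains to express $\sigma_n$ through the $R_{n,k}$ alone. Here I would use the Riccati equation \eqref{Ric-R} in the form $r_{n,k}=\tfrac14\big(R_{n,k}'-(\sum_j R_{n,j}-2t_k)R_{n,k}\big)$ to eliminate every $r_{n,k}$ from the expression \eqref{sig-rR} for $\sigma_n$. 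The hard part is the ensuing algebra: after expanding $\sum_k r_{n,k}$, $\sum_k r_{n,k}^2/R_{n,k}$ and $\sum_k t_k r_{n,k}$, one must check that all cross terms linear in the derivatives $R_{n,k}'$ — in particular the contributions proportional to $(\sum_k R_{n,k})(\sum_k R_{n,k}')$ and to $\sum_k t_k R_{n,k}'$ — cancel, leaving precisely $\sigma_n=\tfrac14$ times the integrand displayed in \eqref{Dint}. I expect this cancellation to be the main obstacle, though it is purely mechanical; once verified, substituting back into the integral completes the proof.
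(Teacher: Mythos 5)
Your proposal is correct and follows essentially the same route as the paper: identify $\mathcal{L}=d/dx$ under the substitution $t_k=c_k+x$, read off \eqref{Rode} from \eqref{PDE-R}, obtain the Toda molecule equation from \eqref{bt-Dsig} together with \eqref{btD} and \eqref{defDhat}, and derive \eqref{Dint} by solving \eqref{Ric-R} for $r_{n,k}$, substituting into \eqref{sig-rR}, and integrating. The cancellation you flag as the main obstacle does indeed occur --- the terms proportional to $\bigl(\sum_k R_{n,k}\bigr)\bigl(\sum_k R_{n,k}'\bigr)$ and to $\sum_k t_k R_{n,k}'$ drop out, yielding exactly $4\sigma_n$ equal to the integrand --- which is precisely the (unwritten) computation behind the paper's equation \eqref{x-sR}.
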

\begin{proof}
Since $x=t_1,c_1=0,c_j=t_j-t_1, j=2,\cdots,m$, we have $t_j=c_j+x$ for $j=1,\cdots,m$ and
\begin{align}\label{Ldx}
\mathcal{L}=\sum\limits_{k=1}^m \frac{\partial}{\partial{t_k}}=\frac{d}{dx}.
\end{align}
Replacing $\mathcal{L}$ by $d/dx$ in \eqref{PDE-R}, we get \eqref{Rode}.
From \eqref{defs} and \eqref{Ldx}, it follows that
\begin{align}\label{sdxD}
\sigma_n(\vec{t}\,)=\frac{d}{dx}\ln D_n(x;\vec{c}\,),
\end{align}
which combined with \eqref{bt-Dsig} gives us
\[\frac{d^2}{dx^2}\ln D_n(x;\vec{c}\,)=2(2\beta_n-n).\]
Plugging \eqref{btD} into the above equation, in view of \eqref{defDhat}, we come to \eqref{Toda-molecule}.

To derive \eqref{Dint}, we recall that $\sigma_n(\vec{t})$ is expressed in terms of the auxiliary quantities $\{R_{n,k},r_{n,k}\}$ by \eqref{sig-rR}. Solving $r_{n,k}$ from the Riccati equation \eqref{Ric-R} and substituting it into \eqref{sig-rR}, we find
\begin{equation}\label{x-sR}
\begin{aligned}
4\sigma_n=&-\sum_{k=1}^m\frac{(R_{n,k}')^2 }{2R_{n,k}}+\frac{1}{2}\biggl(\sum_{k=1}^m R_{n,k}\biggr)^3+\sum_{k=1}^m 2(c_k+x)^2 R_{n,k}\\
&-\biggl(\sum_{k=1}^m 2(c_k+x)R_{n,k}+4n\biggr)\cdot\sum_{j=1}^mR_{n,j},
\end{aligned}
\end{equation}
where $'$ stands for $d/dx$.
According to \eqref{sdxD} and \eqref{defDhat}, we integrate both sides of \eqref{x-sR} and obtain \eqref{Dint}.
%See also the end of Section 3 of \cite{ChenP} for the derivation of the analogue result for the Gaussian weight with one jump discontinuity (i.e. $m=1$ in our problem).
\end{proof}

The next goal of this section is to build relationships between the auxiliary quantities $\{R_{n,k}(\vec{t}\,),r_{n,k}(\vec{t}\,)\}$ and solutions of a coupled $P_{IV}$ system. To achieve this, we make use of the RH problem and the Lax pair satisfied by the monic orthogonal polynomials associated with the weight function \eqref{weight}, which were described in \cite{WuXu20}. Hence, for convenience of our discussions, we first restate the relevant results of \cite{WuXu20} in the next subsection.
%which is equivalent to the following Hamiltonian equations
%\[\frac{da_k}{dx}=\frac{\partial \sigma_n}{\partial b_k},\qquad\qquad \frac{db_k}{dx}=-\frac{\partial \sigma_n}{\partial a_k}. \]

\subsection{RH problem, Lax pair and the coupled Painlev\'{e} IV system}
In \cite{WuXu20}, the following weight function with $m$ jump discontinuities was studied
\begin{align}
w(x;\vec{t}\,,\vec{\omega})=&{\rm e}^{-x^2}
\begin{cases}
1, &x\leq t_1,\\
\hat{\omega}_k, &t_k<x<t_{k+1},\;1\leq k\leq m-1,\\
\hat{\omega}_m, &x\geq t_m,
\end{cases}\nonumber\\
=&{\rm e}^{-x^2}\biggl(1+\sum_{k=1}^m(\hat{\omega}_k-\hat{\omega}_{k-1})\theta(x-t_k)\biggr).\label{weight-1}
\end{align}
where $\vec{\omega}=(\hat{\omega}_0, \hat{\omega}_1,\cdots,\hat{\omega}_m)$ with $\hat{\omega}_0:=1$ and $\theta(\cdot)$ has the same meaning as in \eqref{weight}. It is a special case of our weight function given by \eqref{weight} with $\omega_0=1$ and $ \omega_k=\hat{\omega}_k-\hat{\omega}_{k-1}$.
%For the purpose of the derivation of the relationships between $\{R_{n,k},r_{n,k}\}$ and solutions of coupled Painlev\'{e} IV equation, we need the Riemann-Hilbert (RH for short) problem for $P_n(z;\vec{t}\,)$ described in \cite{WuXu20}. Hence for convenience of our discussions, we present the RH problem and the coupled $P_{IV}$ of \cite{WuXu20} first.

Denote by $P_n(z;\vec{t}\,,\vec{\omega})$ the monic polynomials orthogonal with respect to $w(z;\vec{t}\,,\vec{\omega})$ and define
\begin{align}\label{Y}
Y(z;\vec{t}\,,\vec{\omega}):=
\begin{pmatrix}
P_n(z;\vec{t}\,,\vec{\omega})&\frac{1}{2\pi i}\int_\mathbb{R} \frac{P_n(x;\vec{t}\,,\vec{\omega})w(x;\vec{t}\,,\vec{\omega})}{x-z}dx\\
\frac{-2\pi i}{h_{n-1}} P_{n-1}(z;\vec{t}\,,\vec{\omega})&-\frac{1}{h_{n-1}}\int_\mathbb{R}\frac{P_{n-1}(x;\vec{t}\,,\vec{\omega})w(x;\vec{t}\,,\vec{\omega})}{x-z}dx
\end{pmatrix}.
\end{align}
It is shown in \cite[section 2.5]{WuXu20} that
\begin{align}\label{phi}
\Phi(z;x,\vec{c},\vec{\omega}):=\sigma_1{\rm e}^{\frac{x^2}{2}\sigma_3}Y(z+x;\vec{t}\,,\vec{\omega}){\rm e}^{-\frac{1}{2}(z+x)^2\sigma_3}\sigma_1,
\end{align}
with the Pauli matrices $\sigma_1$ and $\sigma_3$ given by
\[\sigma_1=
\begin{pmatrix}
0&1\\
1&0
\end{pmatrix},\qquad\qquad\sigma_3=\begin{pmatrix}
1&0\\
0&-1
\end{pmatrix},\]
satisfies the following RH problem:
\begin{enumerate}
\item $\Phi(z;x,\vec{c},\vec{\omega})$ is analytic in $\mathbb{C}\backslash\mathbb{R};$
\item $\Phi(z;x,\vec{c},\vec{\omega})$ satisfies the jump condition
    \[\Phi_{+}(z;x,\vec{c},\vec{\omega})=\Phi_{-}(z;x,\vec{c},\vec{\omega})\cdot M,
    \]
    where the jump matrix $M$ is given by
   \begin{align*}
   M=
    \begin{pmatrix}
    1&0\\
    1&1
    \end{pmatrix},
   \qquad\qquad
    \begin{pmatrix}
    1&0\\
    \hat{\omega}_{k-1}&1
    \end{pmatrix},
    \qquad\qquad
    \begin{pmatrix}
    1&0\\
    \hat{\omega}_m&1
    \end{pmatrix}&\end{align*}
for $z<0, ~c_{k-1}<z<c_k ~(2\leq k\leq m)$ and $z>c_m$ respectively;
\item As $z\rightarrow\infty$,
\begin{align*}
\Phi(z;x,\vec{c},\vec{\omega})=\biggl(I+\frac{\Phi_1}{z}+\frac{\Phi_2}{z^2}+O(z^{-3})\biggr){\rm e}^{(z^2/2+xz)\sigma_3}z^{-n\sigma_3};
\end{align*}
\item As $z\rightarrow c_k$,
\begin{equation}\label{phick}
\Phi(z;x,\vec{c},\vec{\omega})=\Phi^{(c_k)}(z;x,\vec{c},\vec{\omega})
\begin{pmatrix}
    1&0\\
    \frac{\hat{\omega}_{k-1}-\hat{\omega}_k}{2\pi i}\ln (z-c_k)&1
    \end{pmatrix}E^{(k)},
    \end{equation}
where ${\rm arg}(z-c_k)\in(-\pi,\pi)$ for $1\leq k\leq m$ and $\Phi^{(c_k)}(z;x,\vec{c},\vec{\omega})$ is analytic near $z=c_k$ with the expansion
\begin{equation}\label{phick-1}
\Phi^{(c_k)}(z;x,\vec{c},\vec{\omega})=Q_{k0}(x)\left(I+Q_{k1}(x)(z-c_k)+O((z-c_k)^2)\right),
\end{equation}
for $1\leq k\leq m$. Here $E^{(k)}=I$ for ${\rm Im} z>0$ and $E^{(k)}=\begin{pmatrix}
    1&0\\
    -\hat{\omega}_k&1
    \end{pmatrix}$ for ${\rm Im}z<0.$
\end{enumerate}

By studying the above RH problem, it was shown in \cite{WuXu20} that $\Phi(z;x,\vec{c},\vec{\omega})$ has the following Lax pair
\begin{align}
\Phi_z(z;x)=&A(z;x)\Phi(z;x),\label{LP-z}\\
\Phi_x(z;x)=&B(z;x)\Phi(z;x),\nonumber
\end{align}
where $A(z;x)$ and $B(z;x)$ are given by
\begin{align}
A(z;x)=&(z+x)\sigma_3+A_{\infty}(x)+\sum_{k=1}^m\frac{A_k(x)}{z-c_k},\label{LP-A}\\
B(z;x)=&z\sigma_3+A_{\infty}(x),\nonumber
\end{align}
and
\begin{align}
A_{\infty}(x)=&\begin{pmatrix}
    0&y(x)\\
    -2\biggl(\sum\limits_{k=1}^m a_k(x)b_k(x)+n\biggr)/y(x)&0
    \end{pmatrix},\nonumber\\
    A_k(x)=&\begin{pmatrix}
    a_k(x)b_k(x)&a_k(x)y(x)\\
    -a_k(x)b_k^2(x)/y(x)&-a_k(x)b_k(x)
    \end{pmatrix},\qquad 1\leq k\leq m.\label{defAk}
 \end{align}
Here $y, a_k$ and $b_k$ are some functions of $x$. By combining the compatibility condition $\Phi_{zx}(z;x)=\Phi_{xz}(z;x)$ with the Lax pair, $a_k(x)$ and $b_k(x)$ were shown in \cite{WuXu20} to satisfy a coupled $P_{IV}$ system whose Hamiltonian is $\sigma_n(\vec{t})$ defined by \eqref{defs}.

\begin{lemma}
\emph {(\cite[Proposition 6]{WuXu20})} The quantity \[\sigma_n(\vec{t}\,)=\sum\limits_{k=1}^m \frac{\partial}{\partial{t_k}}\ln D_n(\vec{t}\,)=\frac{d}{dx}D_n(x;\vec{c})\]
is expressed in terms of $a_k(x)$ and $b_k(x)$ by
\[\sigma_n(\vec{t}\,)=2\sum\limits_{k=1}^m a_kb_k(x+c_k)-2\biggl(\sum\limits_{k=1}^m a_kb_k+n\biggr)\sum\limits_{k=1}^m a_k-\sum\limits_{k=1}^m a_kb_k^2,\]
where $a_k$ and $b_k$ satisfy the following coupled $P_{IV}$ system
\begin{equation}\label{cPiv}
\begin{cases}
%\frac{dy}{dx}=2\left(\sum\limits_{j=1}^m a_j-x\right)y,\\
\frac{d a_k}{dx}=-2a_k\biggl(\sum\limits_{j=1}^m a_j+b_k-x-c_k\biggr),\\
\frac{d b_k}{dx}=b_k^2+2b_k\biggl(\sum\limits_{j=1}^m a_j-x-c_k\biggr)+2\biggl(\sum\limits_{j=1}^m a_jb_j+n\biggr),
\end{cases}
\end{equation}
which is equivalent to the following Hamiltonian equations
\[\frac{da_k}{dx}=\frac{\partial \sigma_n}{\partial b_k},\qquad\qquad \frac{db_k}{dx}=-\frac{\partial \sigma_n}{\partial a_k},\qquad\qquad 1\leq k\leq m. \]
%Eliminate $b_k$ from the last two equations of \eqref{cPiv}, then $\{a_k\}$ satisfy the following nonlinear differential equations
%\begin{equation}\label{cpiv-ODEa}
%\begin{aligned}
%&\frac{d^2a_k}{dx^2}-\frac{1}{2a_k}\biggl(\frac{da_k}{dx}\biggr)^2-4a_k\sum_{j=1}^ma_j\biggl(\sum_{i=1}^m a_i-x-c_j\biggr)\\
%&\qquad\qquad-2a_k\biggl(\sum_{j=1}^m a_j-x-c_k\biggr)^2+2(2n-1)a_k=0,\quad 1\leq k\leq m.
%\end{aligned}
%\end{equation}
\end{lemma}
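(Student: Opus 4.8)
The statement reformulates \cite[Proposition 6]{WuXu20}, and the natural route is to read off both the coupled system and its Hamiltonian directly from the Lax pair \eqref{LP-z} attached to the RH problem, rather than from the ladder-operator identities of the earlier sections. The plan is to impose the zero-curvature compatibility condition on the pair $(A,B)$, extract the coupled $P_{IV}$ system \eqref{cPiv} by matching the residues at the poles $z=c_k$ and the $z$-independent part, and then identify $\sigma_n=\frac{d}{dx}\ln D_n(x;\vec c\,)$ with the isomonodromic Hamiltonian conjugate to $x$, whose explicit form in the canonical coordinates $(a_k,b_k)$ is the asserted expression.

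First I would write the compatibility condition $\Phi_{zx}=\Phi_{xz}$ in the equivalent zero-curvature form
\[
\frac{\partial A}{\partial x}-\frac{\partial B}{\partial z}+[A,B]=0,
\]
and insert $A$ and $B$ from \eqref{LP-A}. Using $\frac{z}{z-c_k}=1+\frac{c_k}{z-c_k}$, the bracket $[A,B]$ splits into a $z$-independent piece and simple poles at the $c_k$. Matching the $z$-independent part yields $A_\infty'+x[\sigma_3,A_\infty]+\sum_k[A_k,\sigma_3]=0$, whose $(1,2)$ entry fixes the gauge factor via $y'/y=2\sum_j a_j-2x$; matching the residue at $z=c_k$ yields $A_k'+c_k[A_k,\sigma_3]+[A_k,A_\infty]=0$. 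Reading the $(1,2)$ entry of this residue equation and eliminating $y'/y$ gives the first equation $\frac{da_k}{dx}=-2a_k(\sum_j a_j+b_k-x-c_k)$, while its $(1,1)$ entry gives $(a_kb_k)'=2a_k(\sum_j a_jb_j+n)-a_kb_k^2$, which combined with the $a_k$-equation produces the second equation for $b_k$. This establishes \eqref{cPiv}, with the parametrization \eqref{defAk} of $A_\infty,A_k$ keeping all entries polynomial in $(a_k,b_k,y)$.

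The core of the argument is the identification of $\sigma_n$ with the Hamiltonian. Here I would invoke the Jimbo--Miwa--Ueno theory of isomonodromic $\tau$-functions: writing $\Phi=M(z){\rm e}^{\Theta(z)}$ with $\Theta(z)=(z^2/2+xz)\sigma_3-n\sigma_3\ln z$ the formal exponent at infinity appearing in item (3) of the RH problem and $M=I+\Phi_1/z+\Phi_2/z^2+O(z^{-3})$ the holomorphic-at-infinity factor, the logarithmic derivative of the $\tau$-function in the deformation parameter $x$ is the residue ${\rm res}_{z=\infty}{\rm tr}\bigl(M^{-1}M_z\,\partial_x\Theta\bigr)={\rm tr}(\sigma_3\Phi_1)$, up to the fixed sign convention. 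Since $D_n(x;\vec c\,)$ coincides with this $\tau$-function up to the elementary factor recorded in \eqref{defDhat}, one obtains $\sigma_n=\frac{d}{dx}\ln D_n$ from ${\rm tr}(\sigma_3\Phi_1)$ after accounting for that prefactor. It then remains to express the diagonal of $\Phi_1$ through $(a_k,b_k)$, which I would do by matching the $O(1/z)$ and $O(1/z^2)$ coefficients in $A=\Phi_z\Phi^{-1}$ against \eqref{LP-A}--\eqref{defAk}: the off-diagonal part of $\Phi_1$ reproduces $A_\infty$, while the subleading orders determine the diagonal combination $(\Phi_1)_{11}-(\Phi_1)_{22}$ in terms of $\sum_k a_k$, $\sum_k a_kb_k$ and $\sum_k c_k a_kb_k$, yielding the stated formula for $\sigma_n$. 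I expect this step -- the residue computation together with the bookkeeping of the prefactor ${\rm e}^{nx^2}$ that distinguishes the normalizations of $D_n$ and $\hat D_n$ -- to be the main obstacle, since it is the only place where the formal expansion at infinity, and not merely the pole data, must be controlled.

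Finally, the equivalence with the Hamiltonian form is a direct verification. Differentiating the obtained expression for $\sigma_n$ gives
\[
\frac{\partial\sigma_n}{\partial b_k}=-2a_k\Bigl(\sum_{j=1}^m a_j+b_k-x-c_k\Bigr),
\]
and
\[
-\frac{\partial\sigma_n}{\partial a_k}=b_k^2+2b_k\Bigl(\sum_{j=1}^m a_j-x-c_k\Bigr)+2\Bigl(\sum_{j=1}^m a_jb_j+n\Bigr),
\]
which are exactly the right-hand sides of \eqref{cPiv}; hence $\frac{da_k}{dx}=\partial\sigma_n/\partial b_k$ and $\frac{db_k}{dx}=-\partial\sigma_n/\partial a_k$. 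This both confirms the Hamiltonian structure and serves as an independent consistency check: Hamilton's equations fix $\sigma_n$ only up to a function of $x$ alone, and along solutions $\frac{d\sigma_n}{dx}=\partial_x\sigma_n=2\sum_k a_kb_k$, which must agree with the $x$-dependence pinned down by the residue computation of the previous paragraph.
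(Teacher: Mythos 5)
A preliminary remark: the paper offers no proof of this lemma --- it is quoted from \cite[Proposition 6]{WuXu20}, and the only methodological hint in the text is that the coupled system follows ``by combining the compatibility condition $\Phi_{zx}=\Phi_{xz}$ with the Lax pair.'' Your proposal reconstructs exactly that route, and the portion you actually carry out is correct. I verified the zero-curvature computation: the $z$-independent part gives $y'/y=2\sum_j a_j-2x$, the residue at $z=c_k$ gives $A_k'+c_k[A_k,\sigma_3]+[A_k,A_\infty]=0$, and with the parametrization \eqref{defAk} the $(1,2)$ and $(1,1)$ entries of the latter produce the $a_k$-equation and $(a_kb_k)'=2a_k\bigl(\sum_j a_jb_j+n\bigr)-a_kb_k^2$, hence \eqref{cPiv}. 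Your closing verification that the stated $\sigma_n$ generates \eqref{cPiv} through Hamilton's equations is also correct.

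The genuine gap is the identification of $\sigma_n=\frac{d}{dx}\ln D_n(x;\vec{c}\,)$ with the displayed polynomial in $(a_k,b_k)$, which is the substance of the proposition and which you leave as a sketch. Concretely: (i) the Jimbo--Miwa--Ueno step is not mere bookkeeping. With the paper's normalization one finds $\mathrm{tr}\,\Phi_1=0$ and $\sigma_n=2(\Phi_1)_{22}-2nx$, and the $O(1/z)$ matching of $\Phi_z\Phi^{-1}$ against \eqref{LP-A} is automatically satisfied on the diagonal (using $(\Phi_1)_{12}(\Phi_1)_{21}=\frac12\bigl(\sum_ja_jb_j+n\bigr)$, which follows from $A_\infty=[\Phi_1,\sigma_3]$, it collapses to $\sum_k a_kb_k=\sum_k a_kb_k$), so the diagonal of $\Phi_1$ --- precisely the quantity you need --- is not determined at the order you propose to match; one must control the next order of the expansion or argue differently. (ii) Your fallback argument in the last paragraph only shows that $d\sigma_n/dx$ and the total $x$-derivative of the claimed Hamiltonian both equal $2\sum_k a_kb_k$ along solutions, hence the two quantities differ by an undetermined constant of integration; fixing that constant to zero requires additional input (for instance asymptotics of both sides in $x$), which is missing. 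Note that the paper's own toolkit supplies a cleaner completion that avoids the $\tau$-function machinery altogether: the dictionary \eqref{Rrab} of Theorem~\ref{ThRrab} (whose proof uses only the local analysis of $\Phi$ at $z=c_k$ together with \eqref{btab}--\eqref{yhe}, not the present lemma) substituted into the ladder-operator identity \eqref{sig-rR} yields the stated formula for $\sigma_n$ by direct algebra, and the same substitution converts the Riccati system \eqref{RicRr} into \eqref{cPiv}. This is the route taken in \cite{LyuChen20} for $m=2$, and it would close your gap without any residue computation at infinity.
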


When $m=2$ in the weight function \eqref{weight-1}, by studying the above-mentioned RH problem and with the Lax pair, the recurrence coefficient $\beta_n(\vec{t}\,)$ of the associated monic orthogonal polynomials can be expressed in terms of $a_k$ and $b_k$, and $y(x)$ that appears in \eqref{defAk} is intimately related to $h_{n-1}(\vec{t}\,)$. These results were presented in Theorem 1 of \cite{WuXu20} and can be generalized to generic $m$ naturally.

\begin{lemma}  The following relationships are valid:
\begin{align}
\beta_n(\vec{t}\,)=&\frac{1}{2}\biggl(\sum_{k=1}^m a_k(x)b_k(x)+n\biggr),\label{btab}\\
y(x)=&\frac{4\pi i {\rm e}^{-x^2} }{{h_{n-1}(\vec{t}\,)}},\label{yhe}
\end{align}
where $i$ is the imaginary unit.
\end{lemma}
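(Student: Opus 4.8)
The plan is to extract both identities from the large-$z$ asymptotics of the Riemann-Hilbert solution $\Phi(z;x,\vec{c},\vec{\omega})$ given in condition (3), and to match them against the corresponding asymptotics of the explicitly defined matrix $Y(z;\vec{t}\,,\vec{\omega})$ from \eqref{Y} via the conjugation relation \eqref{phi}. First I would recall that the lower-left entry of $Y$ involves $-2\pi i/h_{n-1}$ times $P_{n-1}$, which near infinity behaves like $z^{n-1}$; this is the source of the $h_{n-1}$ dependence that must surface in both \eqref{btab} and \eqref{yhe}. The strategy is therefore to compute the expansion coefficients $\Phi_1$ and $\Phi_2$ appearing in condition (3) in two ways: once from the Lax pair structure \eqref{LP-A}--\eqref{defAk}, where they are built out of $A_\infty(x)$, the residue matrices $A_k(x)$, and hence out of $y$, $a_k$, $b_k$; and once from substituting the known expansion of $Y$ into \eqref{phi}, where the coefficients are expressed through the recurrence data $\beta_n$, $p(n,\vec{t}\,)$, and the norm $h_{n-1}$.

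For \eqref{yhe}, the plan is to read off the off-diagonal $(1,2)$-entry of $\Phi_1$. From the Lax-pair side this entry equals $y(x)$, since $A_\infty$ contributes $y$ in that slot and integrating the $z$-equation $\Phi_z=A\Phi$ forces $(\Phi_1)_{12}=y(x)$ (after peeling off the $\mathrm{e}^{(z^2/2+xz)\sigma_3}z^{-n\sigma_3}$ factor). From the $Y$-side, the conjugation by $\sigma_1$ swaps rows and columns, and the prefactor $\mathrm{e}^{x^2\sigma_3/2}$ together with the shift $z\mapsto z+x$ produces the $\mathrm{e}^{-x^2}$ factor; the $(1,2)$-entry then picks up the leading coefficient $-2\pi i/h_{n-1}$ of the lower-left block of $Y$. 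Equating the two gives \eqref{yhe}. The key bookkeeping is tracking how the Gaussian factors $\mathrm{e}^{\pm x^2/2}$ and the $z^{-n\sigma_3}$ normalization interact under the similarity transform \eqref{phi} so that the exponents and powers of $z$ cancel correctly.

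For \eqref{btab}, the plan is to use the standard fact that $\beta_n=(\Phi_1)_{12}(\Phi_1)_{21}$, or equivalently that $\beta_n$ emerges from the product of the $(1,2)$ and $(2,1)$ entries of the residue data; more concretely, the sub-leading term of the RH expansion yields $\beta_n = -(\Phi_1)_{12}(\Phi_1)_{21}$ in terms of the orthogonal-polynomial normalization. On the Lax-pair side, the $(2,1)$-entry of $A_\infty$ is $-2\bigl(\sum_{k=1}^m a_k b_k+n\bigr)/y$, so the product $(\Phi_1)_{12}(\Phi_1)_{21}$ produces $y\cdot\bigl(-2(\sum a_kb_k+n)/y\bigr)=-2(\sum a_kb_k+n)$ up to sign and the factor $\tfrac12$. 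Comparing with the classical identity $\beta_n=h_n/h_{n-1}$ encoded in $Y$, and combining with \eqref{yhe} to eliminate $y$, delivers \eqref{btab}.

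The main obstacle I anticipate is the sign- and constant-tracking through the two conjugating operations in \eqref{phi}: the left and right multiplications by $\sigma_1$ transpose the matrix structure and interchange the roles of the $(1,2)$ and $(2,1)$ entries, while simultaneously the asymmetric placement of the Gaussian exponentials $\mathrm{e}^{x^2\sigma_3/2}$ on the left and $\mathrm{e}^{-(z+x)^2\sigma_3/2}$ on the right must be reconciled with the clean normalization $\mathrm{e}^{(z^2/2+xz)\sigma_3}z^{-n\sigma_3}$ demanded by condition (3). Getting the factor $4\pi i$ (rather than $2\pi i$ or $-2\pi i$) correct in \eqref{yhe} hinges entirely on this, and the same care is needed to ensure the factor $\tfrac12$ and the $+n$ shift in \eqref{btab} come out right rather than being absorbed incorrectly. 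Since the result is asserted to generalize Theorem 1 of \cite{WuXu20} from $m=2$ to arbitrary $m$, and the residue matrices $A_k$ enter $\Phi_1$ only through the \emph{sum} $\sum_{k=1}^m A_k$, I expect the $m$-dependence to collapse cleanly into the sums $\sum_k a_k b_k$ already present, so the generalization itself should be routine once the $m=2$ normalization constants are verified to persist.
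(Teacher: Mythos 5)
You should first note that the paper itself contains no proof of this lemma: it is imported from Theorem 1 of \cite{WuXu20}, where it is established for $m=2$, together with the remark that the extension to generic $m$ is natural. So any honest derivation, such as the one you propose, is necessarily a different route from the paper's, and your overall mechanism --- compute $\Phi_1$ once from the explicit relation \eqref{phi} between $\Phi$ and $Y$, and once from the Lax pair, then equate entries --- is the right one. Your closing remark about the $m$-dependence is also correct for the right reason: the residue matrices $A_k$ are $O(1/z)$ in \eqref{LP-A} and hence do not enter the $O(1)$ matching that determines the off-diagonal entries of $\Phi_1$, so the generalization from $m=2$ is indeed routine.

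However, your sketch contains one error that is not a deferrable ``constant-tracking'' issue, because it sits exactly where the constants in \eqref{btab} and \eqref{yhe} are created. You assert that the Lax pair forces $(\Phi_1)_{12}=y(x)$. It does not: substituting $\Phi=\bigl(I+\Phi_1/z+O(z^{-2})\bigr){\rm e}^{(z^2/2+xz)\sigma_3}z^{-n\sigma_3}$ into \eqref{LP-z} and matching the $O(1)$ terms gives the commutator relation $A_\infty=\Phi_1\sigma_3-\sigma_3\Phi_1$, whence $(\Phi_1)_{12}=-\tfrac{1}{2}y$ and $(\Phi_1)_{21}=-\bigl(\sum_{k=1}^m a_kb_k+n\bigr)/y$. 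Had you run the argument with $(\Phi_1)_{12}=y$, it would terminate with $y=-2\pi i\,{\rm e}^{-x^2}/h_{n-1}$, off from \eqref{yhe} by the factor $-2$. With the commutator relation in hand, the rest of your plan goes through exactly as outlined: re-expanding $Y(z+x)=\bigl(I+(Y_1+nx\sigma_3)/z+O(z^{-2})\bigr)z^{n\sigma_3}$ in \eqref{phi} gives $\Phi_1=\sigma_1{\rm e}^{\frac{x^2}{2}\sigma_3}(Y_1+nx\sigma_3){\rm e}^{-\frac{x^2}{2}\sigma_3}\sigma_1$ (the diagonal term $nx\sigma_3$ drops out of the off-diagonal entries), so that $(\Phi_1)_{12}={\rm e}^{-x^2}(Y_1)_{21}$ and $(\Phi_1)_{21}={\rm e}^{x^2}(Y_1)_{12}$, where orthogonality applied to \eqref{Y} yields $(Y_1)_{21}=-2\pi i/h_{n-1}$ and $(Y_1)_{12}=-h_n/(2\pi i)$. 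Then $y=-2(\Phi_1)_{12}$ is precisely \eqref{yhe}, while $(\Phi_1)_{12}(\Phi_1)_{21}=(Y_1)_{12}(Y_1)_{21}=h_n/h_{n-1}=\beta_n$ by \eqref{bth} on the $Y$-side, and $(\Phi_1)_{12}(\Phi_1)_{21}=\bigl(-\tfrac{1}{2}y\bigr)\cdot\bigl(-\bigl(\textstyle\sum_{k}a_kb_k+n\bigr)/y\bigr)=\tfrac{1}{2}\bigl(\sum_{k}a_kb_k+n\bigr)$ on the Lax-pair side, which is \eqref{btab}. This also settles the sign you left ambiguous: the correct identity is $\beta_n=+(\Phi_1)_{12}(\Phi_1)_{21}$, not its negative.
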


In the next subsection, we will make use of the results presented above and the ones obtained via the ladder operator approach to derive the relationships between $\{R_{n,k},r_{n,k}\}$ and $\{u_k,v_k\}$.

\subsection{Auxiliary quantities and the coupled Painlev\'{e} IV system}
\begin{theorem}\label{ThRrab} Given $0=c_1<c_2<\cdots<c_m$, for $\vec{t}=(t_1,t_2,\cdots,t_m)=(x,c_1+x,\cdots,c_m+x)$, the auxiliary quantities $\{R_{n,k}(\vec{t}\,), r_{n,k}(\vec{t}\,), k=1,\cdots,m\}$ defined by \eqref{Rnrn} are connected with $\{a_k(x;\vec{c}),b_k(x;\vec{c}),k=1,\cdots,m\}$ which satisfy the coupled $P_{IV}$ system \eqref{cPiv} by
\begin{subequations}\label{Rrab}
\begin{align}
R_{n,k}(\vec{t}\,)=&\frac{a_kb_k^2}{\sum\limits_{j=1}^m a_jb_j+n},\label{Rab}\\
r_{n,k}(\vec{t}\,)=&a_kb_k,\label{rab}
\end{align}
\end{subequations}
for $k=1,\cdots,m$,
or equivalently,
\begin{align*}
a_k(x;\vec{c})=&\frac{r_{n,k}^2}{R_{n,k}\biggl(\sum\limits_{j=1}^m r_{n,j}+n\biggr)},\\
b_k(x;\vec{c})=&\frac{R_{n,k}}{r_{n,k}}\biggl(\sum\limits_{j=1}^m r_{n,j}+n\biggr).
\end{align*}
Moreover, $a_k$ is directly related to $R_{n-1,k}$ by
\begin{align}\label{akR}
a_{k}(x;\vec{c})=\frac{1}{2}R_{n-1,k},
\end{align}
for $k=1,\cdots,m$.
\end{theorem}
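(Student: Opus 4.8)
The plan is to identify the residues of the Lax matrix $A(z;x)$ in \eqref{LP-A} with data coming from the ladder operators, using the relation \eqref{phi} between $\Phi$ and the orthogonal-polynomial matrix $Y$ in \eqref{Y}. Since $A(z;x)=\Phi_z\Phi^{-1}$, substituting \eqref{phi} and using that $\sigma_1{\rm e}^{\frac{x^2}{2}\sigma_3}$ is independent of $z$ gives
\[A(z;x)=\sigma_1{\rm e}^{\frac{x^2}{2}\sigma_3}\bigl(Y'Y^{-1}-(z+x)\,Y\sigma_3 Y^{-1}\bigr){\rm e}^{-\frac{x^2}{2}\sigma_3}\sigma_1,\]
where $'$ denotes $d/dz$ and $Y,Y'$ are evaluated at $z+x$. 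The term $(z+x)Y\sigma_3Y^{-1}$ is responsible only for the polynomial part $(z+x)\sigma_3+A_\infty$ and contributes no simple pole at $z=c_k$ (equivalently at $z+x=t_k$); hence the residue of $A(z;x)$ at $c_k$ equals the conjugate, by $\sigma_1{\rm e}^{\frac{x^2}{2}\sigma_3}$, of the residue of $U:=Y'Y^{-1}$ at $t_k$.

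First I would compute $U$ from the ladder operators. Applying the lowering operator \eqref{lp} to $P_n$ and the raising operator \eqref{rp} to $P_{n-1}$, the first column $\bigl(P_n,\ \tfrac{-2\pi i}{h_{n-1}}P_{n-1}\bigr)^{T}$ of $Y$ satisfies a linear first-order system with coefficient matrix
\[U=\begin{pmatrix}-B_n & \dfrac{\beta_n A_n}{\kappa}\\[6pt] -\kappa A_{n-1} & B_n+{\rm v}_0'\end{pmatrix},\qquad \kappa:=\frac{-2\pi i}{h_{n-1}}.\]
Since $U=Y'Y^{-1}$ is common to both columns of $Y$ and $P_n,P_{n-1}$ have distinct degrees, this identifies $U$ completely. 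Using \eqref{An}--\eqref{Bn}, the poles of $A_n,A_{n-1},B_n$ at $t_k$ carry residues $R_{n,k},R_{n-1,k},r_{n,k}$, so the residue of $U$ at $t_k$ is
\[U_k=\begin{pmatrix}-r_{n,k} & \beta_n R_{n,k}/\kappa\\[4pt] -\kappa R_{n-1,k} & r_{n,k}\end{pmatrix}.\]
By \eqref{s2'-3} one has $\det U_k=-r_{n,k}^2+\beta_nR_{n,k}R_{n-1,k}=0$, so $U_k$ is rank one and nilpotent, consistent with the structure of $A_k$ in \eqref{defAk}.

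Next I would conjugate, giving $A_k=\sigma_1{\rm e}^{\frac{x^2}{2}\sigma_3}U_k{\rm e}^{-\frac{x^2}{2}\sigma_3}\sigma_1$, and compare entrywise with \eqref{defAk}. The $(1,1)$ entries give $a_kb_k=r_{n,k}$, which is \eqref{rab}. The $(1,2)$ entries give $a_k y=-{\rm e}^{-x^2}\kappa R_{n-1,k}$; inserting $\kappa=-2\pi i/h_{n-1}$ together with $y=4\pi i{\rm e}^{-x^2}/h_{n-1}$ from \eqref{yhe} yields $a_k=\tfrac12 R_{n-1,k}$, which is \eqref{akR}. Then \eqref{Rab} follows purely algebraically: from \eqref{s2'-3} and the two identities just obtained, $R_{n,k}=r_{n,k}^2/(\beta_nR_{n-1,k})=(a_kb_k)^2/(2\beta_n a_k)=a_kb_k^2/(2\beta_n)$, and \eqref{btab} turns $2\beta_n$ into $\sum_j a_jb_j+n$. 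The inverse relations for $a_k,b_k$ follow by solving \eqref{Rrab}, while the $(2,1)$ entry supplies a consistency check that again reduces to \eqref{s2'-3} and \eqref{btab}.

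I expect the main obstacle to lie in rigorously justifying the first paragraph, namely that $U=Y'Y^{-1}$ is genuinely rational with simple poles at the $t_k$ and is column-independent, and that passing through \eqref{phi} transfers the residue at $t_k$ to $c_k$ without spurious contributions. This requires verifying that the Cauchy-transform (second) column of $Y$ obeys the same ladder-type differential relations as the polynomial column — which rests on ${\rm v}_0(z)=z^2$ being entire and on the weight decaying at $\pm\infty$ — and that the logarithmic singularities of $Y$ at the jump points, together with the conjugating exponential factors in \eqref{phi}, leave the coefficient of the simple pole unchanged. Once the residue identity $A_k=\sigma_1{\rm e}^{\frac{x^2}{2}\sigma_3}U_k{\rm e}^{-\frac{x^2}{2}\sigma_3}\sigma_1$ is secured, all remaining steps are routine algebra.
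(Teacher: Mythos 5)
Your overall strategy---compute the residue of the Lax matrix $A(z;x)$ at $z=c_k$ by passing through \eqref{phi} and comparing entries with \eqref{defAk}---is the same as the paper's, and your end-game algebra is correct: reading \eqref{rab} from the $(1,1)$ entry, \eqref{akR} from the $(1,2)$ entry via \eqref{yhe}, and then getting \eqref{Rab} from \eqref{s2'-3} and \eqref{btab} reproduces the paper's formulas (the paper reads \eqref{rab}, \eqref{Rab} off the $(1,1)$ and $(2,1)$ entries and gets \eqref{akR} from \eqref{s2'-3}; that difference is immaterial). The genuine gap is the step where you identify $U=Y'Y^{-1}$ with the ladder matrix. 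Knowing that the first column of $Y$ solves the system with your matrix $U$ only gives $(Y'Y^{-1}-U)\,\mathrm{col}_1Y=0$, i.e.\ the difference has rank at most one; a $2\times2$ matrix is never determined by its action on a single column, and the remark that ``$P_n,P_{n-1}$ have distinct degrees'' does not close this. In fact the identification is false. Already for the jump-free Gaussian weight, the polynomial column satisfies $P_n'=2\beta_nP_{n-1}$, while the Cauchy transforms satisfy
\begin{equation*}
\frac{d}{dz}\int_{\mathbb R}\frac{P_n(s){\rm e}^{-s^2}}{s-z}\,ds
=2\beta_n\int_{\mathbb R}\frac{P_{n-1}(s){\rm e}^{-s^2}}{s-z}\,ds
-2z\int_{\mathbb R}\frac{P_n(s){\rm e}^{-s^2}}{s-z}\,ds,
\end{equation*}
so the second column of $Y$ solves the system with coefficient matrix $U-{\rm v}_0'(z)I$, not $U$; it is precisely the conjugation by ${\rm e}^{-\frac12(z+x)^2\sigma_3}$ in \eqref{phi} that removes this discrepancy. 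With jumps present the situation is worse: $Y'Y^{-1}$ is not even rational, since differentiating the logarithmic singularities of the Cauchy transforms at the $t_k$ produces genuine $\ln(z-t_k)$ terms in $Y'Y^{-1}$ (they cancel only in the combination $Y'Y^{-1}-(z+x)Y\sigma_3Y^{-1}$, using $\frac{d}{dw}{\rm e}^{-w^2}=-2w\,{\rm e}^{-w^2}$). Consequently, your fallback plan---``verify that the Cauchy-transform column obeys the same ladder-type relations''---cannot succeed, because it obeys different ones.

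What saves your final formulas is a fact your argument does not prove: the coefficient of the simple pole of $Y'Y^{-1}$ at $t_k$ \emph{does} coincide with your $U_k$. But establishing this requires the local structure of $Y$ at $t_k$: writing $Y=\hat Y^{(k)}\bigl(I+g(w)\ln(w-t_k)E\bigr)$ with $g(w)=-\frac{\omega_k}{2\pi i}{\rm e}^{-w^2}$ and $E=\begin{pmatrix}0&1\\0&0\end{pmatrix}$, the pole coefficient is $g(t_k)\,\hat Y^{(k)}(t_k)\,E\,\hat Y^{(k)}(t_k)^{-1}$, and evaluating it uses both columns of $\hat Y^{(k)}(t_k)$, i.e.\ also the regularized values of the Cauchy transforms. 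From the polynomial column alone one can only conclude that the residue is a rank-one nilpotent matrix annihilating $\bigl(P_n(t_k),\,-2\pi i P_{n-1}(t_k)/h_{n-1}\bigr)^T$, which determines it only up to an overall scalar---and that scalar, $-\omega_k{\rm e}^{-t_k^2}/(2\pi i)$, is exactly the content of \eqref{rab} and \eqref{akR}. This missing local computation is what the paper's proof supplies: it never invokes the ladder matrix, but uses the prescribed logarithmic behaviour \eqref{phick}--\eqref{phick-1} of $\Phi$ at $c_k$ to get $A_k=-\frac{\omega_k}{2\pi i}Q_{k0}\begin{pmatrix}0&0\\1&0\end{pmatrix}Q_{k0}^{-1}$ as in \eqref{Ak-1}, and then evaluates $Q_{k0}$ in terms of $P_n(t_k),P_{n-1}(t_k)$ through \eqref{phi} and \eqref{Y}. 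So your route is repairable, but the repair consists of carrying out precisely this local expansion; as written, the key identification in your proposal is unproved and the justification offered for it is incorrect.
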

\begin{proof}
As $z\rightarrow c_k$, by substituting  \eqref{phick} into \eqref{LP-z}, we find
\begin{align*}
A(z;x)=&\Phi_z(z;x)\Phi^{-1}(z;x)\\
=&\Phi_z^{(c_k)}\left(\Phi^{(c_k)}\right)^{-1}+\Phi^{(c_k)}
\begin{pmatrix}
0&0\\
\frac{\hat{\omega}_{k-1}-\hat{\omega_k}}{2\pi i}\cdot\frac{1}{z-c_k}&0
\end{pmatrix}\left(\Phi^{(c_k)}\right)^{-1}.
\end{align*}
According to \eqref{phick-1} and \eqref{LP-A}, we obtain
\begin{align}
A_k=&\frac{\hat{\omega}_{k-1}-\hat{\omega_k}}{2\pi i}Q_{k0}\begin{pmatrix}
0&0\\
1&0
\end{pmatrix}
Q_{k0}^{-1}\nonumber\\
=&-\frac{\omega_k}{2\pi i}
\begin{pmatrix}
\left(Q_{k0}\right)_{12}\left(Q_{k0}\right)_{22}&-\left(Q_{k0}\right)_{12}^2\\
\left(Q_{k0}\right)_{22}^2&-\left(Q_{k0}\right)_{22}\left(Q_{k0}\right)_{12}
\end{pmatrix},\label{Ak-1}
\end{align}
where $\omega_k=\hat{\omega}_{k}-\hat{\omega}_{k-1}$.
From \eqref{phick-1}, we know that
\begin{align}\label{phiQ}
\Phi^{(c_k)}(c_k)=Q_{k0}(x).
\end{align}
Hence, to compute $A_k$, it suffices to work out $\Phi^{(c_k)}(c_k)$.
Combining \eqref{phick} with \eqref{phi}, we get
\begin{align}
\Phi^{(c_k)}(z)=&\Phi(z)\left(E^{(k)}\right)^{-1}\begin{pmatrix}
    1&0\\
    \frac{\omega_k}{2\pi i}\ln (z-c_k)&1
    \end{pmatrix}\nonumber\\
=&\begin{pmatrix}
{\rm e}^{-x^2/2-(z+x)^2/2}Y_{21}(z+x)& {\rm e}^{-x^2/2+(z+x)^2/2}Y_{22}(z+x)\\
{\rm e}^{x^2/2-(z+x)^2/2}Y_{11}(z+x)& {\rm e}^{x^2/2+(z+x)^2/2}Y_{12}(z+x)
\end{pmatrix}\cdot
\begin{pmatrix}
A&1\\
1&0
\end{pmatrix},\label{phick-2}
\end{align}
where $A=\hat{\omega}_k+\frac{\hat{\omega}_{k}-\hat{\omega}_{k-1}}{2\pi i}\ln (z-c_k)$ for ${\rm Im}z<0$ and $A=\frac{\hat{\omega}_{k}-\hat{\omega}_{k-1}}{2\pi i}\ln (z-c_k)$ for ${\rm Im}z>0$.

Now we look at $\left(\Phi^{(c_k)}(z)\right)_{12}$ and $\left(\Phi^{(c_k)}(z)\right)_{22}$. According to \eqref{phiQ} and \eqref{phick-2}, we find
\begin{align*}
(Q_{k0}(x))_{12}=&\left(\Phi^{(c_k)}(z)\right)_{12}|_{z=c_k}\\
=&{\rm e}^{-x^2/2-(z+x)^2/2}Y_{21}(z+x)|_{z=c_k}\\
=&\frac{-2\pi i}{h_{n-1}}{\rm e}^{-x^2/2-t_k^2/2} P_{n-1}(t_k;\vec{t}\,,\vec{\omega}),
\end{align*}
where the last equality is due to \eqref{Y}. Similarly, we can show that
\begin{align*}
(Q_{k0}(x))_{22}
%=&\left(\Phi^{(c_k)}(z)\right)_{22}|_{z=c_k}\\
%=&\e^{x^2/2-(z+x)^2/2}Y_{11}(z+x)|_{z=c_k}\\
=&{\rm e}^{x^2/2-t_k^2/2} P_{n}(t_k;\vec{t}\,,\vec{\omega}).
\end{align*}
Hence it follows from \eqref{Ak-1} that
\begin{align}
(A_k)_{11}=&-\frac{\omega_k}{2\pi i}\left(Q_{k0}\right)_{12}\left(Q_{k0}\right)_{22}\nonumber\\
=&\frac{\omega_k}{h_{n-1}}{\rm e}^{-t_k^2}P_n(t_k)P_{n-1}(t_k)=r_{n,k},\label{A11-1}
\end{align}
where the second equality is due to the definition of $r_{n,k}$ given by \eqref{defr}, and
\begin{align}\label{A21-1}
(A_k)_{21}=&-\frac{\omega_k}{2\pi i}\left(Q_{k0}\right)_{22}^2=-\frac{\omega_k}{2\pi i}{\rm e}^{x^2-t_k^2} P_{n}^2(t_k)=-\frac{h_n}{2\pi i}{\rm e}^{x^2}R_{n,k},
\end{align}
where the last equality results from \eqref{defR}. A combination of \eqref{A11-1} and \eqref{defAk} leads us to \eqref{rab}. Combining \eqref{A21-1} with \eqref{defAk}, in view of \eqref{btab}, \eqref{yhe} and the fact that $\beta_n=h_n/h_{n-1}$, we arrive at \eqref{Rab}.

Inserting \eqref{btab} and \eqref{Rrab} into \eqref{s2'-3}, we come to \eqref{akR}.
\end{proof}

\begin{remark}
Replacing $n$ by $n-1$ in \eqref{Rode} and inserting \eqref{akR} into the resulting expression, we are led to
\begin{equation*}
\begin{aligned}
&\frac{d^2a_k}{dx^2}-\frac{1}{2a_k}\biggl(\frac{da_k}{dx}\biggr)^2-4a_k\sum_{j=1}^ma_j\biggl(\sum_{i=1}^m a_i-x-c_j\biggr)\\
&\qquad\qquad-2a_k\biggl(\sum_{j=1}^m a_j-x-c_k\biggr)^2+2(2n-1)a_k=0,\quad 1\leq k\leq m.
\end{aligned}
\end{equation*}
These equations can also be obtained by eliminating $b_k$ from the coupled $P_{IV}$ system \eqref{cPiv}.
\end{remark}

\section*{Acknowledgments}
Yang Chen was supported by the Macau Science and Technology Development Fund under grant number FDCT 0079/2020/A2, by Guangdong Natural Science Foundation under grant number EF012/FST-CYY/2021/GDSTC, and by the University of Macau under grant number MYRG 2022-00014-FST. Shulin Lyu was supported by National Natural Science Foundation of China under grant numbers 12101343 and 11971492, and by Shandong Provincial Natural Science Foundation with project number ZR2021QA061.

}

\end{document}